\begin{document}

\markboth{Danko D. Georgiev}{Causal potency of consciousness in the physical world}

%%%%%%%%%%%%%%%%%%%%% Publisher's Area please ignore %%%%%%%%%%%%%%%
\catchline{}{}{}{}{}
%%%%%%%%%%%%%%%%%%%%%%%%%%%%%%%%%%%%%%%%%%%%%%%%%%%%%%%%%%%%%%%%%%%%

\title{Causal potency of consciousness in the physical world}

\author{Danko D. Georgiev}

\address{Institute for Advanced Study, 30 Vasilaki Papadopulu Str.\\
Varna, 9010, Bulgaria\\
danko@q-bits.org}

\maketitle

\begin{history}
\received{(11 March 2023)}
\revised{(24 April 2023)}
\accepted{(5 May 2023)}
%\comby{(xxxxxxxxxx)}
\end{history}

\begin{abstract}
The evolution of the human mind through natural selection mandates that our conscious experiences are causally potent in order to leave a tangible impact upon the surrounding physical world. Any attempt to construct a functional theory of the conscious mind within the framework of classical physics, however, inevitably leads to causally impotent conscious experiences in direct contradiction to evolution theory. Here, we derive several rigorous theorems that identify the origin of the latter impasse in the mathematical properties of ordinary differential equations employed in combination with the alleged functional production of the mind by the brain. Then, we demonstrate that a mind--brain theory consistent with causally potent conscious experiences is provided by modern quantum physics, in which the unobservable conscious mind is reductively identified with the quantum state of the brain and the observable brain is constructed by the physical measurement of quantum brain observables. The resulting quantum stochastic dynamics obtained from sequential quantum measurements of the brain is governed by stochastic differential equations, which permit genuine free will exercised through sequential conscious choices of future courses of action. Thus, quantum reductionism provides a solid theoretical foundation for the causal potency of consciousness, free will and cultural transmission.
\end{abstract}

\keywords{ordinary differential equations; quantum indeterminism; stochastic differential equations.}

~

{PACS numbers: 03.67.-a, 03.65.Ud, 03.65.Ta, 87.19.La}

\newpage
\tableofcontents

\newpage
\section{Introduction}

\subsection{The sense of agency}

We are \emph{sentient beings} that possess an \emph{inner psychological world} or a \emph{stream of conscious experiences}, which we simply refer to as a \emph{mind} \cite{James1890,McGilvary1907,Nagel1974}.
In fact, we are what our conscious minds are.
It is only through our conscious experiences that we are able to access the surrounding world, comprehend it and act upon it \cite{Georgiev2017}.
Our personal thoughts, aims, goals and desires motivate us to strive towards achieving a healthy, prosperous and happy life \cite{Peterson2005}.
The subjective awareness of initiating, executing, and controlling our volitional actions in the physical world corroborates daily our \emph{sense of agency}.
In the absence of conscious experiences, however, such as during general anesthesia \cite{Rudolph2004,Franks2008,Brown2011},
syncope \cite{Kapoor2000} or coma \cite{Brown2010}, we lose our
\emph{feeling of agency} within the world. Furthermore, exactly because we
are conscious agents with causative potency, it is possible for civil
law to establish blameworthiness for actions that are considered wrongdoings
\cite{Fagelson1979,Jaffey1985,Hopkins2005} and ethics can hold us
morally responsible for what ensues from our behavior \cite{Sartre2007,Shafer2020}.

\subsection{The problem of mental causation}

The necessity to describe our conscious minds within the framework
of available physical theories, however, inevitably confronts us with
the problem of \emph{mental causation}, namely, how is it possible
and what is the physical mechanism through which our consciousness
is able to affect the physical world. Apparently, branding the conscious
mind as ``non-physical'' is not going to be helpful because
it will immediately put the mind outside of physics, and consequently
if the mind is not subject to any physical laws it would be impossible
to derive any mental causation upon the physical world.

Because physics is the most fundamental scientific discipline, it
is expected to encompass \emph{everything in existence} and study
the \emph{entirety of reality} using mathematical principles \cite{Feynman2013}.
Our conscious minds do exist, therefore they are real and have to be defined as
``physical'' \cite{Georgiev2020a,Georgiev2020b} thereby enjoying
the privilege to be considered a valid subject for discussion by physical
theories. Moreover, if a physical theory does not include consciousness
or makes incorrect predictions with regard to consciousness, then
such a physical theory should be deemed either \emph{falsified} or \emph{incomplete}.
Therefore, conscious experiences should be considered \emph{physical} and their causal effects upon the physical world should be described by a \emph{physical theory}---either a physical theory that we already have or a physical theory that will be constructed in the future.
The focus of the present work will be on comparison of available \emph{classical} or \emph{quantum} physical theories in view of demonstrating that quantum physics already has all the necessary mathematical ingredients for accommodating a causally potent consciousness.

\subsection{Evolutionary theory mandates causally potent consciousness}

The evolution of consciousness and development
of culture in primates \cite{Imanishi1951,Imanishi1952,Imanishi1957,Imanishi1966,Matsuzawa2001,Gruber2012},
including chimpanzees \cite{Goodall1986,Whiten1999,Gruber2009,Lind2010,Boesch2020},
bonobos \cite{Samuni2020,vanLeeuwen2020}, gorillas \cite{Robbins2016},
orangutans \cite{Krutzen2011,Gruber2012b,Koops2014,Lameira2022},
snow monkeys \cite{Kawamura1959,Kawai1965,Leca2012} and early humans
\cite{Gross2020,Ehrenreich2019,Wade2006,Melcher2006,Harris2011,Herzog2010,Conard2009a,Conard2009b,Valladas2001},
is an established empirical fact that entails the causal potency of
consciousness in the physical world \cite{Popper1983,Eccles1991,Eccles1992,Eccles1994}.
From this stronghold, we will investigate the implications of
different physical approaches to modeling consciousness, including
\emph{functionalism} or \emph{reductionism}, within the context of
\emph{classical} or \emph{quantum} physics. Utilizing the
mathematical properties of ordinary differential equations or stochastic
differential equations, we will derive rigorous theorems, which constrain
the available solutions to the problem of mental causation.
In particular, we will show that within available physical theories, it is quantum physics that provides the only plausible framework
for a physical theory of causally potent consciousness and free will.

\subsection{A synopsis of the presentation on the causal potency of consciousness}

The subsequent presentation is structured as follows:
In Section~\ref{sec:2},
we formulate the problem of mental causation and briefly review the
neuroanatomy and neurophysiology of the brain cortex, peripheral neural
system, sensory organs and effector organs. Next,
in Section~\ref{sec:3}
we explain why the problem of mental causation is unsolvable in classical
physics. We also illustrate the mathematical theory of ordinary differential
equations and prove theorems that eliminate classical functionalism
and classical reductionism as plausible theories of consciousness.
Then,
in Section~\ref{sec:4} we illustrate the mathematical theory
of stochastic differential equations and prove theorems that recognize
quantum reductionism as the most plausible theory of consciousness.
In order to make the exposition self-contained,
in Section~\ref{S5} we elaborate on free will and its representation using stochastic processes,
in Section~\ref{S6} we illustrate how knowledge acquisition or learning generates dynamic biases and varying amounts of free will,
in Section~\ref{S7} we explain how quantum entanglement leads to mind binding and constrains free will,
and in Section~\ref{S8} we describe the physical mechanism underpinning the wave function collapse and disentanglement.
Lastly, in Section \ref{sec:9} we discuss the significance of the presented results and how they provide a consistent account of the natural evolution of the human mind.

\section{The causal potency problem}
\label{sec:2}

We react to sensory stimuli that are present in the surrounding world.
The neurophysiological account of our reaction starts with transduction
of the sensory stimulus into an electric signal by the sensory organs.
For example, the eyes convert visible light into electric currents
in the retina, the ears convert audible sound into electric currents
in the cochlea, and the skin converts mechanical pressure from touch
into electric currents in encapsulated nerve endings of Meissner corpuscles
\cite{Hall2021,Kandel2021}. The sensory information carried by these
electric signals then propagates along sensory pathways that reach
corresponding sensory areas in the brain cortex (Fig.~\ref{fig:1}A).
The voluntary decision for action is executed by the motor area of
the brain cortex, which sends motor electric signals toward the effector
organs such as the skeletal muscles that move the body (Fig.~\ref{fig:1}B).
Therefore, the overall process of reacting to sensory stimuli is a
sequential composition of inputting the sensory information from the
body to the brain cortex and outputting the voluntary choice of action
from the brain cortex to the body \cite{Georgiev2017}.

\begin{figure}[t!]
\begin{centering}
\includegraphics[width=\textwidth]{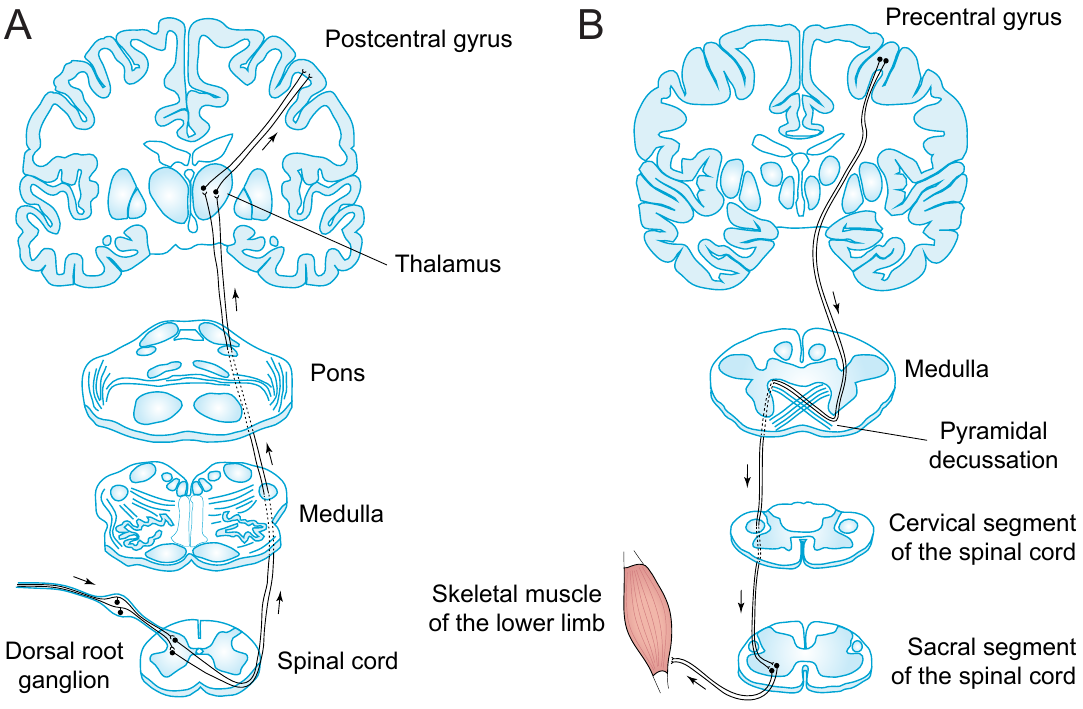}
\par\end{centering}
\caption{\label{fig:1}The human brain interacts with its physical environment. Electric signals in the form of action potentials mediate
the physiological communication between the human brain cortex and the body: the somatosensory pathway (A) delivers sensory information
from the body to the somatosensory cortex in the postcentral gyrus, whereas the somatomotor pathway (B) delivers motor information from
the motor cortex in the precentral gyrus to the body muscles. The spinal cord segments, medulla and pons are represented with their
transversal sections, whereas thalamus and cortex are shown in frontal slice. Modified from Ref.~\citen{Georgiev2017}.}
\end{figure}

Clinical observations from injuries of the human nervous system have
accumulated an overwhelming amount of evidence that the brain cortex
is the \emph{seat of consciousness} \cite{Ropper2019}. This is consistent
with physiological delays of at least 50 ms temporal duration between
the application of the sensory stimulus at the sensory organ and the
conscious perception of the sensation in the brain cortex or between
the conscious decision to elicit a voluntary movement in the brain
cortex and the actual contraction of the skeletal muscle \cite{Cattell-1,Cattell-2,Cattell-3}.

The somatosensory pathways from the body towards the brain cortex
and the somatomotor pathways from the brain cortex towards the body
could be either bypassed or replaced with brain--machine interfaces,
which deliver sensory information directly to the brain cortex where
they are consciously experienced \cite{Dobelle2000} or receive motor
information directly from the brain cortex for the control of robotic
devices \cite{Hochberg2006}. The engineering of brain--machine interfaces
and their practical application in neurorehabilitation is assisted by the layered organization
of pyramidal neurons inside the gray matter of the brain cortex \cite{Georgiev2020c}.

\begin{figure}[t!]
\begin{centering}
\includegraphics[width=\textwidth]{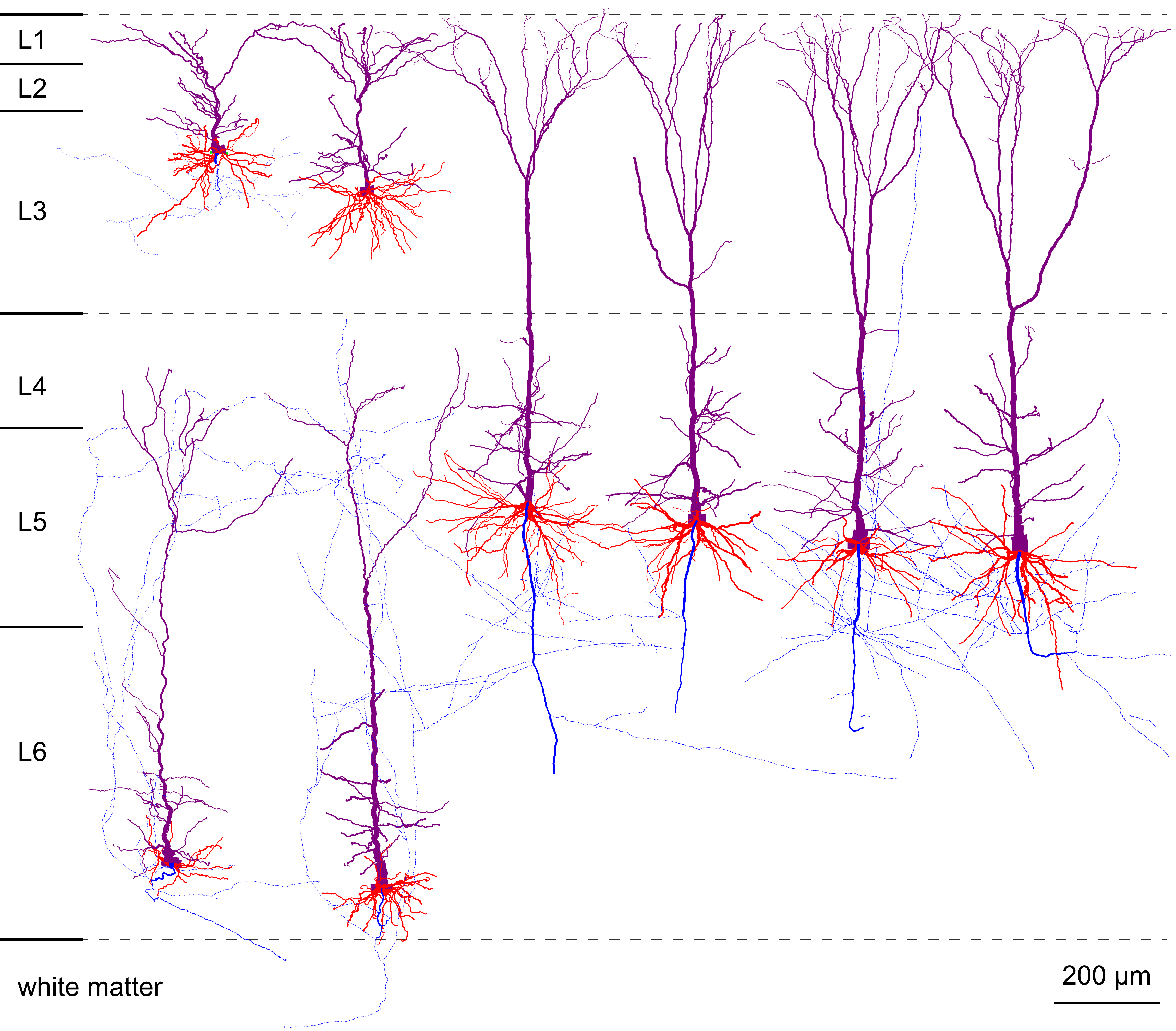}
\par\end{centering}

\caption{\label{fig:2}Layered structure (L1--L6) of the gray matter of rat
neocortex based on digital reconstructions from NeuroMorpho.org of
Layer~2-3 pyramidal neurons (\href{https://neuromorpho.org/neuron_info.jsp?neuron_id=NMO_49059}{NMO\_49059}, \href{https://neuromorpho.org/neuron_info.jsp?neuron_id=NMO_49054}{NMO\_49054}),
Layer~5 pyramidal neurons (\href{https://neuromorpho.org/neuron_info.jsp?neuron_id=NMO_77908}{NMO\_77908}, \href{https://neuromorpho.org/neuron_info.jsp?neuron_id=NMO_77904}{NMO\_77904}, \href{https://neuromorpho.org/neuron_info.jsp?neuron_id=NMO_77905}{NMO\_77905}, \href{https://neuromorpho.org/neuron_info.jsp?neuron_id=NMO_77920}{NMO\_77920}) and
Layer~6 pyramidal neurons (\href{https://neuromorpho.org/neuron_info.jsp?neuron_id=NMO_09382}{NMO\_09382}, \href{https://neuromorpho.org/neuron_info.jsp?neuron_id=NMO_64646}{NMO\_64646}).
Basal dendrites are rendered in red, apical dendrites in purple, and axons in blue.
Neuron identification numbers are listed from left to right in the order
the rendered reconstructions are assembled. Neuron reconstructions can be retrieved by their identification numbers at \url{https://neuromorpho.org/KeywordSearch.jsp}.}
\end{figure}

The microscopic neuroanatomy of a cross section of the brain cortex
reveals columns of vertically stacked pyramidal neurons \cite{Nieuwenhuys1994,Hoffmann2015,Anastassiou2015,Marx2012,Marx2015,Bekkers2011,Ascoli2006}
organized into 6 layers (Fig.~\ref{fig:2}). The first layer~L1 is
closest to the cortical surface, whereas the sixth layer~L6 lays deepest.
The dendrites of pyramidal neurons, which receive electric information,
extend towards the superficial cortical layers and form dense arborizations \cite{Karimi2020}.
The axons of pyramidal neurons, which output electric information,
extend towards the underlying white matter whose characteristic color
is due to the myelin sheets that insulate the axons from leaking their
ionic currents into nearby inactive axons \cite{Stadelmann2019}.

The collective excitation of columns of pyramidal neurons creates
multiple individual electric potentials that summate spatiotemporally into
a larger \emph{local field potential} that can be recorded with micro-electrodes
\cite{Long2021} implanted in the extracellular space of cortical
tissue \cite{Hwang2009,Hochberg2006}. If the recorded electric activity
from the brain cortex is forwarded to computer program that controls
a robotic arm, the conscious mind is able to train itself after several
months of practice to control the robotic arm without actually moving
any of the body muscles. Thus, our conscious minds appear to be causally
potent agents within the physical world, because if they were not,
conscious control of brain--machine interfaces would not have been
possible \cite{Georgiev2017,Georgiev2021c}. The problem of mental causation is
to explain how the conscious mind is able to physically affect the
electric activity of cortical pyramidal neurons. Whether such an explanation
is possible depends critically on the physical approach chosen for
addressing the mind--brain problem. We will elaborate on this in great detail next.

\section{Causal potency of consciousness in classical physics}
\label{sec:3}

The world of classical physics is based on two fundamental postulates.
First, it is assumed that all physical quantities are \emph{observable}.
This means that the physical states of classical systems can be measured
with physical instruments. Second, it is assumed that the \emph{time
dynamics} of physical states is governed by a \emph{system of ordinary
differential equations} (ODEs). This means that given an initial state
$S(0)$ of a classical system, its future time evolution $S(t)$ is
deterministic and can be calculated with absolute certainty and arbitrarily
high precision in principle.
\begin{example}
(Deterministic dynamics) The mathematical properties
of classical deterministic dynamics can be illustrated with the following
3-dimensional chaotic jerky Lorenz-like system whose time evolution is governed
by a single third-order ordinary differential equation (ODE) of a
time-dependent variable $x(t)$ \cite{Sprott2009}
\begin{equation}
\left(\frac{d}{dt}\right)^{3}x=-\left(\frac{d}{dt}\right)^{2}x-4\frac{d}{dt}x+5x-x^{3}
\label{eq:1}
\end{equation}
With the use of the following substitutions
\begin{eqnarray}
x_{1} & =& x \label{eq:2}\\
x_{2} & =& \frac{d}{dt}x=\frac{d}{dt}x_{1} \label{eq:3}\\
x_{3} & =& \left(\frac{d}{dt}\right)^{2}x=\frac{d}{dt}x_{2} \label{eq:4}\\
\frac{d}{dt}x_{3} & =& \left(\frac{d}{dt}\right)^{3}x
\end{eqnarray}
the single third-order ODE \eqref{eq:1} can be re-written as a system
of three first-order ODEs 
\begin{equation}
\begin{cases}
\frac{d}{dt}x_{1} & = x_{2}\\
\frac{d}{dt}x_{2} & = x_{3}\\
\frac{d}{dt}x_{3} & = -x_{3}-4x_{2}+5x_{1}-x_{1}^{3}
\end{cases}
\end{equation}
The main point emphasized here is that any single higher-order ODE
can always be re-written in mathematically equivalent form as a system
of several first-order ODEs. Once we have the system of first-order
ODEs, we can simulate the trajectory in time of the physical system
for any initial state $S(0)$ given as a set of observables $\{x_{1}(0),x_{2}(0),x_{3}(0)\}$
at the initial time $t=0$. The determinism of the computed dynamics
is manifested in the fact that no matter how many simulation runs
we perform (with the same initial state and a fixed level of precision), the resulting trajectory always remains the same (Fig.~\ref{fig:3}).
\end{example}
\begin{figure}
\begin{centering}
\includegraphics[width=\textwidth]{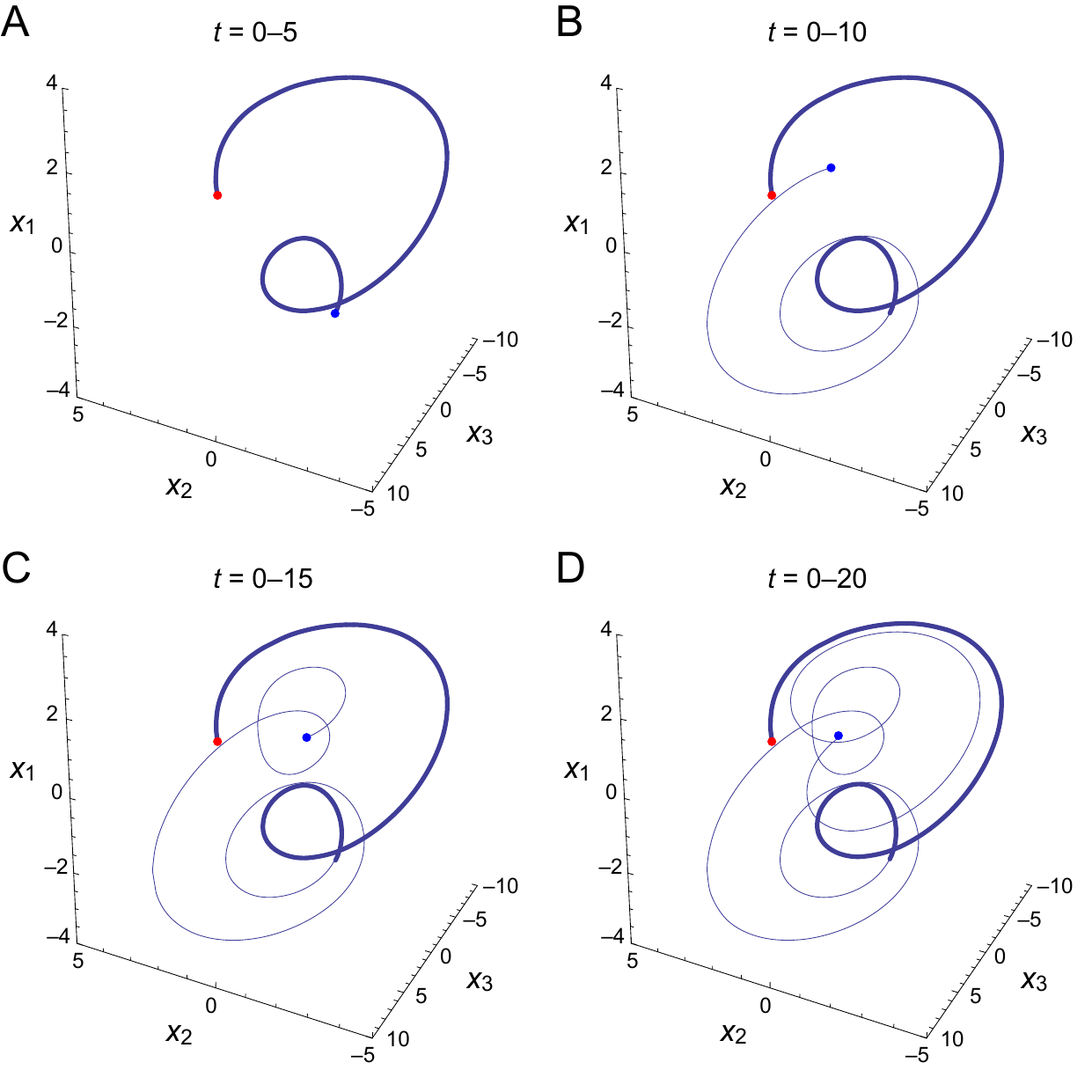}
\par\end{centering}

\caption{\label{fig:3}Deterministic dynamics of a physical system whose time
evolution is governed by the ordinary differential equation \eqref{eq:1}.
Each of the four simulation runs starts from the same initial state
$S(0)$ with $x_{1}(0)=1$, $x_{2}(0)=2$ and $x_{3}(0)=3$, but lasts
for different period of time $t=$0--5 in (A), $t=$0--10 in (B),
$t=$0--15 in (C) or $t=$0--20 in (D). The characteristic feature
of deterministic dynamics is that no matter how many runs are performed with the same initial state,
the initial segment of the trajectory for the period $t=$0--5 (thick purple line) will
always remain the same. We have performed four simulations with different durations because if we had simulated the same period of time $t=$0--5 consecutively four times, we would have ended with four
identical panels in the composite figure without any evidence that these were four different simulations.
The initial state is shown with a red point, whereas the final state is shown with a blue point.
The units of $t$ and $x_1$ are arbitrary, whereas the units of $x_2$ and $x_3$ are fixed as corresponding rates of change by \eqref{eq:3} and \eqref{eq:4}.}
\end{figure}

The mathematical properties of ordinary differential equations may
look innocent, but create serious difficulties once we start bringing
forward putative mind--brain models.

\subsection{Classical functionalism implies causally impotent consciousness}
\label{sub:31}

\begin{definition}
(Functionalism)
\emph{Functionalism} is the philosophical stance that the observable brain
\emph{produces} the unobservable conscious experiences. The main
feature is that the brain $\Phi$ produces the mind $\Psi$, namely
$\Phi\rightarrow\Psi$, but the brain $\Phi$ and the mind $\Psi$
are mathematically distinct entities, $\Phi\neq\Psi$.
The symbol $\rightarrow$ is used to indicate \emph{functional production}.
Here, the word \emph{produces} can be replaced with other
synonyms without changing the definition of functionalism, e.g.,
the brain \emph{gives rise} to the mind, the brain
\emph{generates} the mind or the brain \emph{creates} the mind.
The same meaning can also be expressed as: the mind \emph{emerges
from} the brain, the mind \emph{originates from} the brain or the
mind \emph{is constructed by} the brain.
\end{definition}

The classical aspect of brain modeling comes from the requirement
that all physical observables of the brain are governed by ordinary
differential equations (ODEs). This means that given the initial state
of the brain $\Phi(0)$ we can compute deterministically (i.e., solve numerically using a computer program) the future
state of the brain $\Phi(t)$ for any time $t>0$. Furthermore, because
the mind $\Psi$ is unobservable, it is not present as a variable
in the set of ODEs that govern brain dynamics.
The combination of classical physics and functionalism inevitably
implies that the mind is causally impotent and cannot affect anything
inside the physical world \cite{Georgiev2017}.

\begin{theorem}
\label{thm:1}
Classical functionalism implies that the conscious mind lacks causal potency and is unable to affect the physical world.
\end{theorem}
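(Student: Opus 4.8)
The plan is to formalize the intuition that if the mind $\Psi$ is produced by the brain $\Phi$ via $\Phi \rightarrow \Psi$, but $\Psi$ never enters the system of ODEs governing the brain's observables, then $\Psi$ can have no reciprocal influence on the physical trajectory. First I would set up the formal framework: let the observable brain state be $\Phi(t) \in \mathbb{R}^n$ governed by a first-order autonomous system $\frac{d}{dt}\Phi = F(\Phi)$, which by the example preceding the theorem we may take without loss of generality (any higher-order ODE reduces to a first-order system). The key structural fact from the Definition of functionalism is that $\Psi$ is a function of $\Phi$ alone, say $\Psi(t) = G(\Phi(t))$ for some production map $G$, and crucially $\Psi$ is absent from the right-hand side $F$, since $\Psi$ is unobservable and cannot appear as a dynamical variable in the equations governing observables.

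The central step is an appeal to the Picard--Lindel\"of existence and uniqueness theorem for ODEs. Assuming $F$ is (locally) Lipschitz, the initial value problem $\frac{d}{dt}\Phi = F(\Phi)$, $\Phi(0) = \Phi_0$, has a \emph{unique} solution $\Phi(t)$ for each fixed initial condition $\Phi_0$. I would then argue by contradiction: suppose the mind were causally potent, meaning that two distinct conscious experiences could produce two different future physical trajectories starting from the identical physical initial state $\Phi_0$. But the trajectory $\Phi(t)$ is determined entirely by $\Phi_0$ and $F$, neither of which contains any reference to $\Psi$; hence the solution is the same regardless of what the mind $\Psi$ does. Any purported difference in mental content $\Psi = G(\Phi)$ is a downstream image of $\Phi$ and cannot feed back to alter $F$ or $\Phi_0$. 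Uniqueness therefore forces the two trajectories to coincide, contradicting the assumed causal potency.

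To make the epiphenomenalism conclusion airtight I would phrase causal potency operationally: the mind affects the physical world if and only if varying $\Psi$ while holding the physical initial state fixed changes some observable $\Phi(t)$ at a later time. Since $\Psi$ enters neither the vector field $F$ nor the initial data, the map from initial observable state to future observable state factors through the physics alone, and $\Psi$ is a causally inert ``shadow'' of $\Phi$. This establishes that classical functionalism entails a causally impotent, epiphenomenal mind.

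The main obstacle I anticipate is not the mathematics, which reduces to uniqueness of ODE solutions, but the precise formalization of ``causal potency'' in a way that is both faithful to the informal sense of agency discussed in the introduction and sharp enough to be refuted by the uniqueness theorem. In particular, one must be careful to exclude the degenerate escape route in which $\Psi$ is identified with $\Phi$ (this is blocked by the Definition's stipulation $\Phi \neq \Psi$) and to ensure that the Lipschitz/regularity hypotheses guaranteeing uniqueness are genuinely satisfied by the class of brain models under consideration; if uniqueness failed, the argument would not close. Pinning down these hypotheses and the counterfactual definition of causal influence is where the real work lies.
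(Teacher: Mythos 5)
Your proposal is correct and follows essentially the same route as the paper: both arguments rest on the observation that, since $\Psi$ appears neither in the vector field nor in the initial data, the solution $\Phi(t)$ is fixed by $\Phi(0)$ alone, so varying the mental content cannot change the physical trajectory (the paper phrases this as a subtraction of two simulation runs yielding zero, you phrase it as uniqueness of the initial value problem \`{a} la Picard--Lindel\"{o}f). Your explicit Lipschitz hypothesis and operational counterfactual definition of causal potency sharpen what the paper asserts informally, but they do not constitute a different proof strategy.
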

\begin{proof}
The main premises of classical functionalism can be summarized as
follows:

Premise 1. The act of functional production of the mind by the brain,
$\Phi(t)\rightarrow\Psi(t)$, entails an infinite list of productions
at each time point $t$: $\Phi(0)\rightarrow\Psi(0)$, $\Phi(t_1)\rightarrow\Psi(t_1)$,
$\Phi(t_2)\rightarrow\Psi(t_2)$, $\ldots$, $\Phi(t_n)\rightarrow\Psi(t_n)$.

Premise 2. The physical states of the brain $\Phi(t)$ and the surrounding
world $W(t)$ are governed by an explicitly given system of ordinary
differential equations (ODEs) in which only physical observables of
the brain and the world are present. The interaction between the brain and the surrounding world is reflected in the fact that the system of ODEs may not be separable in terms of brain and world variables.

From the second premise, which gives explicitly the system of physical
equations, we can use the initial states of the brain $\Phi(0)$ and
the world $W(0)$ to deterministically compute the future states of
the brain $\Phi(t)$ and the world $W(t)$ for any future time $t>0$.
These future states $\Phi(t)$ and $W(t)$ will remain exactly the
same regardless of what conscious experiences $\Psi(0),\Psi(t_1),\Psi(t_2),\ldots$
are produced by the brain, due to the fact that the conscious experiences
do not enter in the physical equations. Moreover, there will be no
difference whatsoever between the two cases in which any conscious
experiences $\Psi(0),\Psi(t_1),\Psi(t_2),\ldots$ are produced (case 1)
or not produced (case 2), since the computation of future states $\Phi(t)$
and $W(t)$ does not require knowledge of $\Psi(0),\Psi(t_1),\Psi(t_2),\ldots$
The causal effect of the conscious experiences is determined by subtraction
of the brain and world dynamics obtained in the absence of conscious
experiences (case 2) from the dynamics obtained in the presence of
conscious experiences (case 1). Because both dynamics are the same,
after the subtraction we are left with zero causal effect of conscious experiences
\begin{gather}
\Phi_\textrm{case~1}(t)-\Phi_\textrm{case~2}(t) = 0 \\
W_\textrm{case~1}(t)-W_\textrm{case~2}(t) = 0
\end{gather}
Therefore, the conscious experiences $\Psi(0),\Psi(t_1),\Psi(t_2),\ldots$ that are
produced by the classical brain cannot affect anything in the brain
or the surrounding world.

The proof of the theorem remains unaffected even if we assume that
the brain has to perform some ``function'' in order to generate
conscious experiences. Because the brain ``function'' does
not correspond to a single brain state at a single time point, but
corresponds to a sequence of brain states that form a trajectory from
$\Phi(t_{0})$ to $\Phi(t_{1})$ to generate the conscious experience
$\Psi(0)$, we only need to discretize the first premise as follows:
$[\Phi(t_{0})\sim\Phi(t_{1})]\rightarrow\Psi(0)$,
$[\Phi(t_{1})\sim\Phi(t_{2})]\rightarrow\Psi(t_1)$,
$[\Phi(t_{2})\sim\Phi(t_{3})]\rightarrow\Psi(t_2)$, $\ldots$, etc.,
where the symbol $\sim$ indicates the unique trajectory with the
given initial and final state. Solving the system of ODEs uses only
the second premise, which again reproduces the lack of causal potency
of the functionally generated mind upon the brain or the surrounding
world.
\end{proof}

The implications of Theorem~\ref{thm:1} are that neither human consciousness nor animal consciousness could have evolved through natural selection in classical functionalism \cite{Georgiev2017}.
Furthermore, if conscious experiences were unable to affect anything in the physical world, then their presence would have been utterly
meaningless \cite{James1879}. Since the human mind has evolved naturally and has left
cultural artifacts such as tools \cite{Zupancich2016}, musical instruments \cite{Conard2009b}, hand-carved
statuettes \cite{Conard2009a} and wall paintings in prehistoric caves \cite{Burkitt1921,Chauvet1996,Chauvet2001,Aubert2014,Hoffmann2018,Pike2012,Lewis2002,Lewis2003,Gross2020}, it follows that classical functionalism has to be false.\\

Some philosophers have attempted to express the mathematical theory
behind ordinary differential equations (ODEs) as a list of independent
verbal postulates and have misleadingly identified the contents of
Theorem~\ref{thm:1} as the ``problem of mental causation''
\cite{Kim1997,Kim1998}. The main philosophical goal was to isolate
the verbal postulate that is ``wrong'' and fix it. Ultimately,
it was concluded that the ``causal closure'' of the physical
world is the main culprit leading to causally impotent consciousness,
hence it is mandatory to resort to ``reductionism'' if we
were to have a theory of causally potent consciousness \cite{Kim2003}.
Although we consider that the overall move towards reductionism is
on the right track, there are several important inaccuracies that
make this previous philosophical work wanting:

First, the mathematical theory of ordinary differential equations
(ODEs) cannot be split into independent verbal postulates from which
one is allowed to chose from. Instead, the theory of ODEs is build
upon prerequisite mathematical concepts introduced in ordinary calculus
such as mathematical \emph{functions}, \emph{series}, \emph{limits}, \emph{derivatives} and \emph{integrals}
\cite{Strang1991}. The resulting mathematical properties of the solutions
of ODEs, including \emph{existence}, \emph{determinism} and \emph{uniqueness} given exact
initial condition, are inherited all together and cannot be dropped
selectively as one pleases.

Second, a major culprit leading to causally impotent consciousness
is the fact that the conscious states \emph{are not present inside} the system
of ODEs. In fact, whether the mind is labeled as ``physical''
or ``non-physical'' is irrelevant for its ``closure''
from the physical world. Even if the mind is defined to be ``physical'',
it will not be able to affect the brain or the surrounding world provided
that the mind is absent from the ODEs that govern the time dynamics
of the brain and the surrounding world.

Third, the \emph{determinism} of ODEs is indispensable for the proof
of Theorem~\ref{thm:1}. The subtraction procedure between solutions
obtained with the same initial conditions in different simulation
runs returns zero value only for ODEs, but not for stochastic differential
equations (SDEs) as we shall see in Section~\ref{sec:4}.

Fourth, the contents of Theorem~\ref{thm:1} is not itself a ``problem'',
but rather it is an ``indicator'' that at least one or maybe
both of the listed premises are false in the actual world in which
we live in, where evolution of human consciousness is possible through
natural selection.\\

The importance of Theorem~\ref{thm:1} is that it provides concrete guidelines on how a putative physical
theory of causally potent consciousness should look like. In particular,
from the constructed proof of the theorem, it is clear that the conscious
experiences $\Psi(0),\Psi(t_1),\Psi(t_2),\ldots,\Psi(t_n)$ \emph{need to be present inside}
the physical equations in order to affect the future dynamics of the
brain and the physical world. This endorses a form of ``reductionism'',
albeit not necessarily a classical one as we shall see next.

\subsection{Classical reductionism implies lack of free will}
\label{sub:32}

\begin{definition}
(Reductionism)
\emph{Reductionism} is the philosophical stance that the brain (or a part
of the brain) \emph{is identical with} the mind, $\Phi=\Psi$.
We have already discussed clinical evidence that the seat of human consciousness
is located in the \emph{brain cortex} (cf. Section~\ref{sec:2}).
Pinpointing the exact brain area hosting consciousness, however, may
not be relevant for the subsequent discussion because cortical pyramidal
neurons are comprised from the same basic chemical elements as other
cells in the body, hence the experiential aspect is immediately attributed to all living matter.
Since the logical \emph{identity} relation goes in both ways, it implies
not only that human conscious experiences are built up from chemical
atoms, but also that chemical atoms outside of the human brain \emph{are}
conscious experiences whose only distinction is that they are not ``our'' conscious experiences.
In other words, reductionism endorses a form of ``panpsychism'' or ``panexperientialism'', according to which the physical world is comprised from conscious experiences.
These experiences can be attributed to multiple minds, thereby avoiding the depressing thesis of ``solipsism,'' according to which our mind is the only thing that exists in the universe.
\end{definition}

Expressed as a percentage from the total mass,
the elemental composition of the brain is 74.4\% oxygen~(O), 12.1\%
carbon~(C), 10.6\% hydrogen~(H), 1.8\% nitrogen~(N), 0.4\% phosphorus~(P),
0.1\% sulfur~(S) and 0.6\% other elements (Na, K, Mg, Cl, etc.).
Our estimates for the brain elements are based on the chemical composition
of the brain including 78.9\% water, 11.3\% lipids, 9.0\% proteins,
0.2\% nucleic acids and 0.6\% electrolytes \cite{OBrien1965,Winick1968,Friede1971}.
The chemical composition of the brain contains fewer chemical elements compared to the complete periodic table of chemical elements.
This leaves the theoretical possibility to identify conscious
experiences with only those chemical elements that appear in the brain,
while leaving other elements that occur naturally in the surrounding
world as lacking any experiences. Here, we have not entertained such
a partial attribution of experiences to only part of the world because
all chemical elements are ultimately composed from protons, neutrons and electrons.
If~experiences are attributed to all
elementary particles, the result will be exactly \emph{panexperientialism}.

The reductive claim expressed by the \emph{mind--brain identity}, $\Phi=\Psi$,
is that the two distinct labels ``$\Phi$'' and ``$\Psi$''
refer to the same thing in physical reality. For example, when we
refer to the real person Alice, we may use either the personal name
``Alice'' or words like ``she'' or ``herself''.
Even though the three words, ``Alice'', ``she'' and
``herself'', are different, they all mean the same thing,
which is that particular individual named Alice. The distinction between
the label ``Alice'' and the real person Alice is the same
as between the \emph{map} and the \emph{territory} \cite{Korzybski1994}. In other
words, the mind--brain identity stipulates that when we talk about
``pyramidal neurons firing electric spikes in the brain cortex'',
we literally talk about conscious experiences that exist in the real
world. Or to put it differently, the reductionism denies the philosophical
stance of ``naive realism'' according to which the phrase ``pyramidal
neurons firing electric spikes in the brain cortex'' refers
to existing insentient pyramidal neurons firing electric spikes in the brain cortex.
If what exists in the world as \emph{territory} is only the conscious mind (experience), then
the ``brain'' is just a label on the \emph{map} of our scientific theory that refers to the existing mind.
In contrast, the mind--brain problem is unavoidable in ``naive realism'' where the ``brain'' refers to an existing
brain composed on insentient matter that is then expected to somehow generate
the conscious experiences (sentience) that comprise the conscious mind.
Having clarified the meaning of the mind--brain identity thesis, we
are ready for its implications in classical physics:

\begin{theorem}
\label{thm:2}
Classical reductionism implies that the conscious mind
is causally potent and affects the physical world. Such conscious
mind, however, lacks free will because it cannot choose its dynamics.
\end{theorem}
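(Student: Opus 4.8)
The plan is to establish the two assertions of Theorem~\ref{thm:2} separately, since they concern distinct properties of the reductive identity $\Phi=\Psi$ in classical physics. First I would argue \emph{causal potency}. The key observation is that reductionism collapses the ontological gap exploited in Theorem~\ref{thm:1}: under functionalism the mind $\Psi$ was \emph{absent} from the governing ODEs, but under the identity $\Phi=\Psi$ the conscious experiences \emph{are} the brain observables appearing in those equations. Thus I would set up the same framework as in Premise~2 of Theorem~\ref{thm:1}---a system of ODEs for the brain state $\Phi(t)$ and the world $W(t)$---but now note that each $\Phi(t)$ literally \emph{is} the conscious experience $\Psi(t)$. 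The causal-effect computation then changes decisively: there is no longer a meaningful ``case~2'' in which the experiences are removed while leaving the dynamics intact, because removing $\Psi(t)$ means removing $\Phi(t)$ itself, which alters the ODE solution. The subtraction that returned zero in Theorem~\ref{thm:1} no longer applies, and a nonzero causal effect follows. This half is essentially definitional and should be short.

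The harder and more substantive half is the \emph{absence of free will}. Here I would lean on the determinism and uniqueness properties of ODE solutions emphasized in the Deterministic Dynamics example and in the post-proof discussion of Theorem~\ref{thm:1}. The plan is to formalize ``free will'' as the capacity of the mind to \emph{choose} among genuinely distinct future trajectories starting from a given present state. I would then invoke the Picard--Lindel\"of existence-and-uniqueness structure: given an initial condition $\Phi(0)=\Psi(0)$, the system of first-order ODEs admits a \emph{unique} solution $\Phi(t)$ for all $t>0$. Consequently the entire future stream of experiences $\Psi(t_1),\Psi(t_2),\ldots$ is fixed the instant the present state is specified. Because the identity $\Phi=\Psi$ forces the mind's dynamics to coincide with the brain's ODE trajectory, the mind has exactly one available future and therefore cannot select among alternatives. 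I would make this contrast explicit against the stochastic case foreshadowed for Section~\ref{sec:4}, where non-uniqueness of trajectories under SDEs reopens room for choice.

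I anticipate the main obstacle to be conceptual rather than computational: pinning down a definition of free will rigorous enough that its negation follows cleanly from uniqueness of ODE solutions, while not smuggling in assumptions that would trivialize the result. In particular I must be careful that the determinism argument does not inadvertently undercut the causal-potency claim just established---the resolution being that causal \emph{potency} (the mind's states enter and shape the equations) and causal \emph{freedom} (the mind could have done otherwise) are logically independent, and classical reductionism secures the former while uniqueness of solutions destroys the latter. A secondary subtlety is that the argument should not depend on any special feature of a particular ODE system but only on the generic deterministic-uniqueness property, so I would phrase the uniqueness step for an arbitrary admissible system of first-order ODEs rather than for the illustrative Lorenz-like example. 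Concluding, I would remark that this impasse motivates replacing deterministic ODEs with stochastic dynamics, setting up the transition to quantum reductionism in the next section.
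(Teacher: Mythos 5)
Your proposal is correct and follows essentially the same route as the paper: causal potency is secured because the mind--brain identity places the conscious states inside the ODEs and cannot be ``turned off,'' so the subtraction argument of Theorem~\ref{thm:1} fails, while the absence of free will follows from the determinism of the ODE dynamics. Your explicit appeal to existence--uniqueness of solutions merely spells out what the paper leaves as ``follows directly from the deterministic dynamics of ODEs,'' so there is no substantive difference in approach.
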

\begin{proof}
Provided with an initial state of the brain $\Phi(0)$ and the world $W(0)$, it is always
possible to deterministically compute the future states of the brain
$\Phi(t)$ and the world $W(t)$ for any future $t>0$. However, since the brain and the
mind are identical, $\Phi(t)=\Psi(t)$, for all $t\geq 0$, it follows
trivially that the initial state of the mind $\Psi(0)$ affects the
future state of the brain $\Phi(t)$. The brain is part of the world,
which means that the mind affects the world. This is not surprising
because from the mind--brain identity, the nature of the physical world becomes essentially mental.
Because consciousness is present in the physical equations,
the subtraction argument based on simulation of brain dynamics in the presence or absence of
conscious experiences cannot be applied to reductionism due to the fact that the
\emph{identity of an entity with itself} cannot be turned off.
By logical necessity, it is always true that $\Phi(t)=\Phi(t)$ and it is always
false that $\Phi(t)\neq\Phi(t)$. Therefore, if $\Phi(t)$ is a mental
state, then its mental nature cannot be turned off without negating
$\Phi(t)$ itself. This concludes the proof of the first part of the theorem.
The second part, which concerns the lack of free will, follows directly from the deterministic
dynamics of ODEs.
\end{proof}

The main advantage of reductionism is that the conscious mind is present
in the physical equations that govern the future time dynamics of
the physical world. The resulting panexperientialism then provides
a mental substrate from which more complex and elaborate minds could
evolve through natural selection. In the words of the psychologist William
James, the natural evolution of the human mind requires the existence
of ``mind dust'' in nature \cite{James1890}. We agree with
that conclusion, but we have formulated it in terms of physical reductionism
and presence of conscious experiences in the system of ODEs that govern
the behavior of the physical world.

Still, there is an unsettling inconsistency between classical
physics and panexperientialism, because classical physics postulates
that the brain is ``observable'', whereas the phenomenological,
qualitative nature of conscious experiences is ``unobservable''
\cite{Georgiev2020a,Georgiev2020b}. To fix this serious problem,
one would need to admit that not all aspects of physical reality are
observable, hence classical physics needs to be repaired in some form
or another. Fortunately, modern physicists have already found a better
replacement for classical physics, which is provided by quantum physics
\cite{Dirac1967,Susskind2014,Bhaumik2019}. Framing the mind--brain problem in
the context of quantum physics, not only solves the ``unobservability''
and the ``causal potency'' of consciousness, but also introduces
``free will'' as we shall see next.

\section{Causal potency of consciousness in quantum physics}
\label{sec:4}

Quantum physics conflicts conceptually with classical physics in a remarkable way.
Because quantum systems possess the ability to accomplish physical tasks that are classically impossible,
their ``quantumness'' or ``quantum nature'' is a valuable \emph{physical resource} that is worth having \cite{Georgiev2022b,Georgiev2022c}.
Noteworthy, in the quantum world, there is a fundamental dichotomy between ``existence''
and ``observability'' because \emph{what exists} is different
from \emph{what can be observed} \cite{Georgiev2017,Georgiev2020a,Georgiev2020b}.
Mathematically, this difference is expressed in the fact that \emph{quantum
states} are \emph{vectors}~$\vert\Psi\rangle$ in Hilbert space~$\mathcal{H}$,
whereas \emph{quantum observables} $\hat{A}$ are \emph{operators}
on the Hilbert space~$\mathcal{H}$ \cite{Dirac1967,Susskind2014,vonNeumann1932,vonNeumann1955,Bhaumik2020}.
For example, the general quantum
state of a spin-1 particle can be written as
\begin{equation}
\vert\Psi\rangle=\left(\begin{array}{c}
\alpha_{1}\\
\alpha_{2}\\
\alpha_{3}
\end{array}\right)=\alpha_{1}\vert\uparrow_{z}\rangle+\alpha_{2}\vert \bigcirc_{z}\rangle+\alpha_{3}\vert \downarrow_{z}\rangle
\end{equation}
where $\vert \uparrow_{z}\rangle$, $\vert \bigcirc_{z}\rangle$ and $\vert \downarrow_{z}\rangle$
are the eigenvectors of the $z$-component of the spin-1 observable
\begin{equation}
\hat{\sigma}_{z}=\left(\begin{array}{ccc}
1 & 0 & 0\\
0 & 0 & 0\\
0 & 0 & -1
\end{array}\right)=1\vert \uparrow_{z}\rangle\langle\uparrow_{z}\vert +0\vert \bigcirc_{z}\rangle\langle\bigcirc_{z}\vert -1\vert \downarrow_{z}\rangle\langle\downarrow_{z}\vert 
\end{equation}
with corresponding eigenvalues $1$, 0 and $-1$. The relationship
between eigenvectors and eigenvalues of quantum observables and quantum
measurements of quantum states is given by the Born rule \cite{Born1955,Vaidman2020} as follows:
Suppose that we have the quantum state $\vert \Psi\rangle$ and we measure
the quantum observable $\hat{\sigma}_{z}$. Then the quantum system
chooses indeterministically the outcomes from the eigenvalues of $\hat{\sigma}_{z}$
and at the same performs a quantum jump into the corresponding eigenvector
of $\hat{\sigma}_{z}$. The probabilities for the different choices
are computed from the quantum probability amplitudes that define the
quantum state $\vert \Psi\rangle$, namely, the quantum system produces
the eigenvalue outcome $1$ and jumps into the eigenstate $\vert \uparrow_{z}\rangle$
with probability~$\vert \alpha_{1}\vert ^{2}$, produces the eigenvalue outcome
$0$ and jumps into the eigenstate $\vert \bigcirc_{z}\rangle$ with probability~$\vert \alpha_{2}\vert ^{2}$,
or produces the eigenvalue outcome $-1$ and
jumps into the eigenstate $\vert \downarrow_{z}\rangle$ with probability~$\vert \alpha_{3}\vert ^{2}$.
Symbolically, we write the probabilities for
the alternative quantum jumps as
\begin{eqnarray}
\textrm{prob}\left(\vert \Psi\rangle\hookrightarrow\vert \uparrow_{z}\rangle\right) & = & \langle\Psi\vert \uparrow_{z}\rangle\langle\uparrow_{z}\vert \Psi\rangle=\vert \alpha_{1}\vert ^{2}\\
\textrm{prob}\left(\vert \Psi\rangle\hookrightarrow\vert \bigcirc_{z}\rangle\right) & =& \langle\Psi\vert \bigcirc_{z}\rangle\langle\bigcirc_{z}\vert \Psi\rangle=\vert \alpha_{2}\vert ^{2}\\
\textrm{prob}\left(\vert \Psi\rangle\hookrightarrow\vert \downarrow_{z}\rangle\right) & =& \langle\Psi\vert \downarrow_{z}\rangle\langle\downarrow_{z}\vert \Psi\rangle=\vert \alpha_{3}\vert ^{2}
\end{eqnarray}

Quantum indeterminism is a characteristic feature of quantum systems
because for the description of the outcomes of a sequence of quantum
measurements performed on a dynamic quantum system, one needs to introduce
a stochastic process $\Gamma(t)$ and solve \emph{stochastic differential
equations} (SDEs) \cite{Gudder1979}. Because the \emph{stochastic calculus} that is
needed to solve SDEs was developed by the Japanese mathematician Kiyosi
It\^{o} \cite{Ito1944,Ito1950,Ito1975,Ito1984}, it is often referred to as \emph{It\^{o} calculus} \cite{Ikeda1996}.

\begin{example}
(Stochastic dynamics) \label{exa:2}To illustrate the mathematical
properties of quantum stochastic dynamics, consider the jerky
Lorenz-like system into which is injected the quantum stochastic process~$\Gamma(t)$ 
\begin{equation}
\left(\frac{d}{dt}\right)^{3}x=-\left(\frac{d}{dt}\right)^{2}x-4\frac{d}{dt}x+5x-x^{3}+\Gamma
\label{eq:9}
\end{equation}
where $\Gamma(t)$ is a continuous function obtained through linear
interpolation of a sequence of unbiased choices from the set $\{1,0,-1\}$
performed at unit time intervals. The remarkable feature of stochastic
dynamics is that every simulation run produces almost surely a different
result (Fig.~\ref{fig:4}). This is a direct consequence from the
fact that the stochastic process $\Gamma(t)$ produces different sequences
of chosen outcomes for different simulation runs. If the simulation
is run for $t$ units of time, the stochastic process $\Gamma(t)$
will involve $t+1$ unbiased choices. Therefore, the probability to
produce two identical simulation runs is $\textrm{prob}(\textrm{run~1}=\textrm{run~2})=\left(\frac{1}{3}\right)^{t+1}$,
which approaches zero in the limit $t \to \infty$. It is in this sense that different
stochastic runs produce ``almost surely'' different results.
\end{example}
\begin{figure}
\begin{centering}
\includegraphics[width=\textwidth]{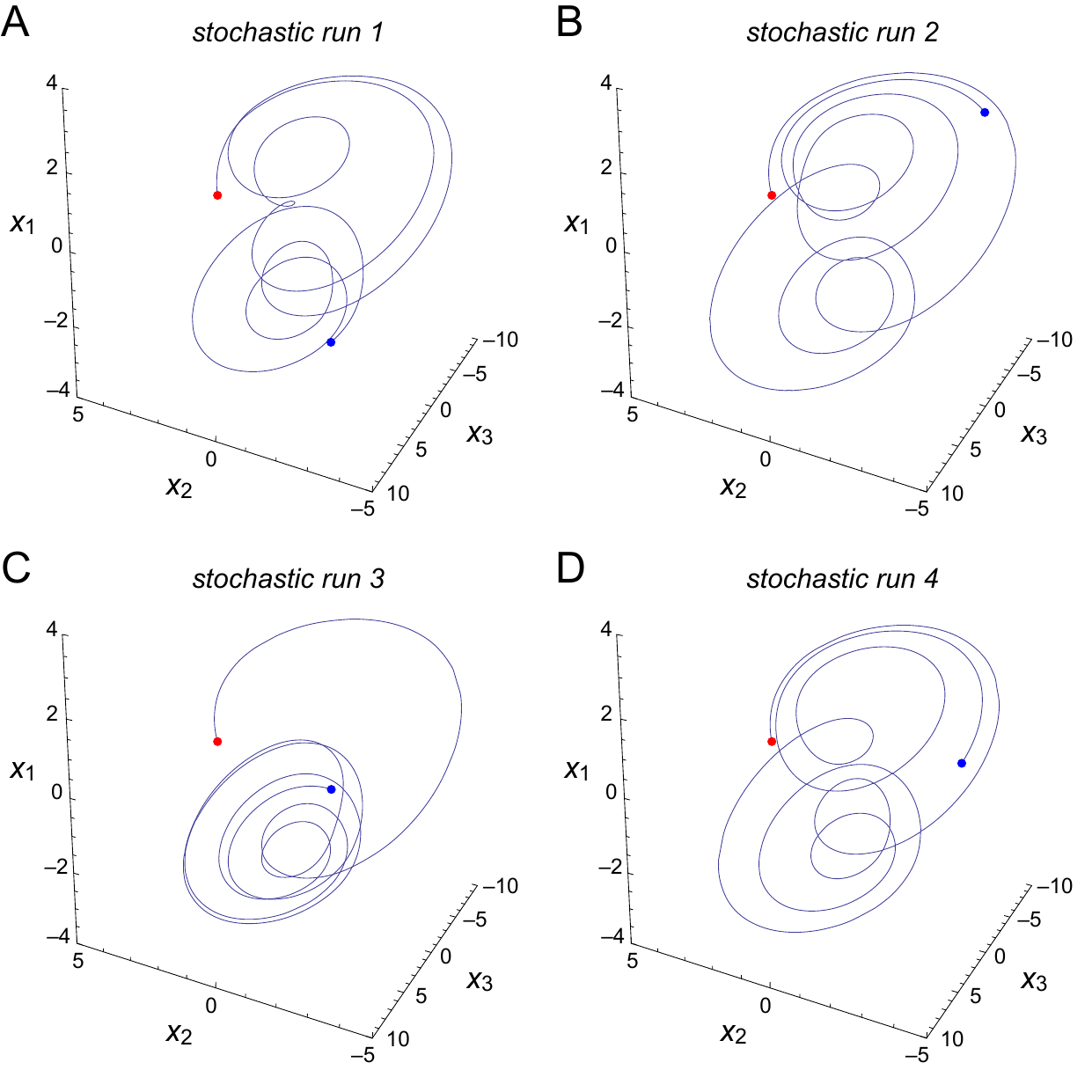}
\par\end{centering}

\caption{\label{fig:4}Stochastic dynamics of a physical system whose time
evolution is governed by the stochastic differential equation \eqref{eq:9}.
Each of the four simulation runs (A--D) starts from the same initial
state $S(0)$ with $x_{1}(0)=1$, $x_{2}(0)=2$ and $x_{3}(0)=3$
and lasts for exactly the same period of time $t=$0--20. The characteristic
feature of stochastic dynamics is that for each run the dynamic trajectory
is almost surely going to be different as the outcomes of each stochastic
choice are drawn from a certain probability distribution. The initial
state is shown with a red point, whereas the final state is shown
with a blue point.
The units of $t$ and $x_1$ are arbitrary, whereas the units of $x_2$ and $x_3$ are fixed as corresponding rates of change by \eqref{eq:3} and \eqref{eq:4}.}
\end{figure}

The theory of stochastic differential equations (SDEs) is very rich
and includes the theory of ordinary differential equations (ODEs)
as a special case. There are two ways that one can obtain deterministic
dynamics. First option is to consider the expectation values of the
quantum measurements performed on the quantum system \cite{Georgiev2019,Georgiev2020d,Georgiev2022,Georgiev2022b}.
This means that one needs to collect a sufficiently large sample of
individual stochastic trajectories and then compute a single average
trajectory. The computation of such an average trajectory may or may
not be useful and its physical interpretation may or may not be meaningful due to potential presence of outliers.
For example, the average of 1 billionaire and 999 poor people will
be a ``millionaire''. Learning that the ``average''
person from the group of 1000 people is a ``millionaire'',
however, is useless and utterly misleading because the distribution
of the sample consists of 99.9\% of poor people. Thus, working with
expectation values and interpreting those expectation values requires
good understanding of statistics \cite{Huff1993} and quantum foundations
\cite{Gudder1988}. Second option to obtain deterministic dynamics
is to consider highly biased probability distributions with zero variance.
In Example~\ref{exa:2}, we have considered unbiased stochastic process $\Gamma(t)$,
which makes the resulting simulated trajectories equally likely. However,
a uniform probability distribution can be continuously transformed
into a highly biased probability distribution that is narrowly peaked
onto a single outcome with zero variance. Quantum measurement theory \cite{vonNeumann1955,Busch1996} allows for physical
realization of the full spectrum of quantum probability distributions
from a completely uniform distribution to a highly nonuniform distribution
consisting of a single narrow peak onto a single outcome. This means
that the behavior of quantum systems depends critically on the measurement
context. For some quantum measurement contexts, the resulting dynamic
trajectory may appear to be random, whereas for other quantum measurement
contexts the resulting dynamic trajectory may appear to be deterministic \cite{Georgiev2021} (see also Section~\ref{S5}).
That is why quantum physics is indispensable for the proper understanding
of consciousness and free will.
Furthermore, the causal potency of consciousness is no longer threatened in the quantum world,
as we shall demonstrate next.

\subsection{Quantum functionalism does not guarantee causally potent consciousness}
\label{sub:41}

Quantum functionalism understood as the \emph{quantum brain state}
$\vert \Phi\rangle$ producing the \emph{conscious mind} $\Psi$ is not in itself directly
incompatible with causally potent consciousness. If the quantum
stochastic dynamics of the brain state $\vert \Phi(t)\rangle$ is governed
by a system of SDEs, it would follow that different simulation runs
produce different results, hence the subtraction of two different
stochastic runs will not be zero. This leaves room for the introduction of conscious
action that is causally potent.

It should be noted that the brain is an open quantum system interacting with its physical environment, because the brain inputs sensory information and outputs motor information (Fig.~\ref{fig:1}). Nevertheless, we prefer writing the quantum brain state with the use of the brain quantum state vector~$\vert \Phi\rangle$, rather than the brain density matrix~$\hat{\rho}$, because we model ``genuine'' quantum stochastic dynamics that is due to the presence of objective wave function collapses \cite{Ghirardi1986,Ghirardi1990,Bassi2003} that lead to actualization of single measurement results with intermittent disentanglement of the brain and its physical environment \cite{Georgiev2017} (see also Section~\ref{S8}).
In fact, it can be shown that ``no collapse'' models of quantum mechanics, in which the whole universe as a closed system is described only by the Schr\"{o}dinger equation, are no different than the classical models based on ordinary differential equations (ODEs). In particular, the Schr\"{o}dinger equation is an ODE and the partial trace operation is perfectly deterministic procedure, which would imply that Theorems~\ref{thm:1} and \ref{thm:2} apply to such ``no collapse'' quantum theory of consciousness. Extensive criticism of epiphenomenal consciousness in ``no collapse'' models of quantum mechanics has already been presented elsewhere \cite{Georgiev2017} and will not be repeated here. Instead, we would like to focus on the resolution of the problem with mental causation provided by genuine stochastic ``quantum jumps'' that are mathematical representation of the objective physical wave function collapses in ``dynamical collapse'' models of quantum mechanics.

\begin{theorem}
\label{thm:3}
Quantum functionalism does not guarantee causally potent consciousness, but leaves room for upgrading the theory to a form of interactive mind--brain dualism.
\end{theorem}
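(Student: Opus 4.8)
The plan is to run the proof of Theorem~\ref{thm:1} in reverse and pinpoint precisely where it collapses in the stochastic setting, then to show that this collapse is necessary but \emph{not} sufficient for causal potency. First I would restate the quantum functionalist premises in parallel with the classical ones: the quantum brain state $\vert\Phi(t)\rangle$ evolves under a system of SDEs with intermittent Born-rule quantum jumps (as in Example~\ref{exa:2}), and the conscious mind $\Psi$ is produced by this state, $\vert\Phi\rangle\rightarrow\Psi$, with the distinctness condition $\vert\Phi\rangle\neq\Psi$ inherited from the definition of functionalism. The first key step is to observe that the subtraction argument driving Theorem~\ref{thm:1} no longer returns zero: by Example~\ref{exa:2}, two runs launched from the identical initial state $\vert\Phi(0)\rangle$ yield almost surely distinct trajectories, so $\Phi_{\textrm{case~1}}(t)-\Phi_{\textrm{case~2}}(t)\neq 0$ generically. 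This breaks the \emph{impossibility} half of the classical result.

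The second, and more delicate, step is to argue that breaking the impossibility proof is not itself a proof of potency---this is the content of ``does not guarantee.'' Here I would emphasize that in \emph{pure} functionalism the transition probabilities of the quantum jumps are fixed entirely by the Born rule and depend only on $\vert\Phi\rangle$ (through the amplitudes $\vert\alpha_i\vert^2$), whereas $\Psi$ never enters the SDE. Consequently the statistical law governing the brain trajectory is independent of how, or whether, the mind is produced, and the nonzero run-to-run difference is attributable to \emph{objective quantum randomness} rather than to any mental act. The mind therefore remains a passive downstream readout: it is \emph{correlated} with the trajectory (being produced by the brain state) but does not \emph{cause} it. I would make this explicit by comparing the ensemble of trajectories with the mind ``present and potent,'' ``present and impotent,'' and ``absent''; in pure functionalism all three ensembles coincide, establishing that potency is not guaranteed.

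The final step is to exhibit the room for upgrade. Because the stochastic dynamics leaves a genuine indeterminacy at each quantum jump---a slot that the Born rule fills with a random draw---one may consistently modify the theory so that the conscious mind $\Psi$ occupies this slot, selecting or biasing the outcome. Under such a modification $\Psi$ enters the dynamical law and becomes causally potent, while the retained distinctness $\vert\Phi\rangle\neq\Psi$ keeps the mind and brain as separate interacting entities, i.e., a form of interactive mind--brain dualism. Crucially, unlike the deterministic ODE case---where inserting a mental choice would over-determine an already fixed trajectory and produce a contradiction---the SDE has real freedom to absorb the choice, so the upgrade is self-consistent (reproducing Born statistics for unbiased choices and deviating for biased ones). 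The main obstacle I anticipate is precisely the negative half: rigorously separating \emph{correlation} (which holds trivially, since $\Psi$ is produced by $\vert\Phi\rangle$) from \emph{causation} (which requires $\Psi$ to appear in the dynamical law). Getting this distinction airtight---so that the mere fact of differing runs is not mistaken for evidence of mental efficacy---is what makes ``does not guarantee'' a substantive claim rather than a restatement of Theorem~\ref{thm:1}.
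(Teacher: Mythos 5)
Your proposal is correct and follows essentially the same route as the paper's proof: observe that the subtraction argument of Theorem~\ref{thm:1} fails because distinct stochastic runs differ almost surely, note that this non-zero difference need not be attributed to the mind since $\Psi$ still does not appear in the SDEs, and then upgrade the theory by postulating that the mind is the agent selecting the outcomes of the quantum jumps, yielding interactive dualism. Your explicit ensemble comparison (potent/impotent/absent mind) is a slightly more detailed rendering of the paper's one-line observation, but it is the same argument.
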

\begin{proof}
The main premises of quantum functionalism can be summarized as follows:

Premise 1. The act of functional production of the mind by the brain,
$\vert \Phi(t)\rangle\rightarrow\Psi(t)$, entails an infinite list of
productions at each time point $t$: $\vert \Phi(0)\rangle\rightarrow\Psi(0)$,
$\vert \Phi(t_1)\rangle\rightarrow\Psi(t_1)$, $\vert \Phi(t_2)\rangle\rightarrow\Psi(t_2)$,
$\ldots$, $\vert \Phi(t_n)\rangle\rightarrow\Psi(t_n)$.

Premise 2. The quantum states of the brain $\vert \Phi(t)\rangle$ and
the surrounding world $\vert W(t)\rangle$ are governed by an explicitly
given system of stochastic differential equations (SDEs) in which
only physical observables of the brain and the world are present.

From the second premise, we can infer that for each simulation run
the quantum states of the brain $\vert \Phi(t)\rangle$ and the surrounding
world $\vert W(t)\rangle$ undergo sequences of quantum jumps, which we
will denote with the symbol $\hookrightarrow$ as follows:
\begin{eqnarray}
\vert \Phi(t)\rangle & = &\vert \Phi(0)\rangle\hookrightarrow\vert \Phi(t_1)\rangle\hookrightarrow\vert \Phi(t_2)\rangle\hookrightarrow\ldots\hookrightarrow\vert \Phi(t_{n})\rangle\hookrightarrow\ldots\\
\vert W(t)\rangle & = &\vert W(0)\rangle\hookrightarrow\vert W(1)\rangle\hookrightarrow\vert W(2)\rangle\hookrightarrow\ldots\hookrightarrow\vert W(t_{n})\rangle\hookrightarrow\ldots
\end{eqnarray}
The unitary quantum interaction between the brain and its environment will lead to production of quantum entangled clusters of neurons in the brain cortex that will disentangle to the corresponding product states $\vert \Phi(t)\rangle\otimes \vert W(t)\rangle$ at the instances $t_0, t_1, \ldots, t_n$ when the definite measurement results are actualized through ``dynamical collapses'' \cite{Georgiev2017} (see also Section~\ref{S8}).
All the unitary quantum dynamics resulting from the Schr\"{o}dinger equation, which is an ODE, will be dependent on the actual brain quantum Hamiltonian and is purposefully left implicit in the $\hookrightarrow$ symbols to prevent unnecessary distraction.
The \emph{conceptual highlight} in the above mathematical description is that, in general, performing a subtraction for two different stochastic
runs 1 and 2 will not produce a zero result
\begin{gather}
\vert \Phi(t)\rangle_{\textrm{run }1}-\vert \Phi(t)\rangle_{\textrm{run }2}\neq 0 \\
\vert W(t)\rangle_{\textrm{run }1}-\vert W(t)\rangle_{\textrm{run }2}\neq 0
\end{gather}
This non-zero difference does not have to be attributed to the action
of the conscious mind $\Psi(t)$, but if the quantum functionalism
wants to have a theory of consciousness that is consistent with natural
evolution, then it is possible to introduce as a postulate that the
conscious mind is the agent that chooses the particular outcomes for
the brain states at each quantum jump. The chosen brain states will
then affect the state of the surrounding world, and the resulting
theory will be a form of interactive mind--brain dualism.
\end{proof}

Despite that quantum physics is not incompatible with functionalism and causally potent consciousness, the very idea of ``production'' or ``emergence'' of consciousness is problematic for the following reasons.

First, the postulated emergence of consciousness is \emph{ad hoc}
and not different from the postulation of occurrence of ``miracles''.
For example, one may as well postulate that a flutter of fairies appeared
from the brain, then they performed a dance to welcome their fairy
queen, and finally they all decided what the next brain state should
be as an outcome of the current quantum jump. In other words, it is
quite unsettling that the brain did not have the capacity to perform
the choice of the quantum jump by itself without resort to any external
agency.

Second, even if the emergence of a conscious mind $\Psi_{i}$ is granted
for each quantum brain state $\vert \Phi_{i}\rangle$, it is not clear
what prevents the possibility of paranormal action? For example, how
is it possible that the conscious mind $\Psi_{i}$ recognizes that
it can act upon the quantum brain state $\vert \Phi_{i}\rangle$ but not
on another present quantum brain state $\vert \Phi_{j}\rangle$? Or to
put it in more familiar terms, what is the physical mechanism that
prevents Alice's consciousness to act upon Bob's brain and vice versa?

Third, even more severe objection to functionalism is the fact that
it is easy for one to falsify it empirically. For example, to postulate
that our conscious mind chooses what our brain state should be, implies
that we are knowingly selecting which neuron in our brain will be
firing and which neuron will remain silent, which ion channel is be
open for the passage of electric current and which ion channel will
remain closed. Introspectively, we can verify that we have no idea
what brain state we are choosing. What is more, no human being on
Earth has the slightest idea of what the actual chemical composition
of their own brain is. Therefore, it seems that we are not knowingly
choosing our brain states.

Fortunately, a safe way out from all these problems is provided by
\emph{quantum reductionism}.

\subsection{Quantum reductionism guarantees causally potent consciousness and free will}
\label{sub:42}

Quantum reductionism identifies the quantum state of the brain $\vert \Phi\rangle$
with the conscious mind $\vert \Psi\rangle$ at all times, namely, $\vert \Phi(t)\rangle=\vert \Psi(t)\rangle$.
Because the identity of a thing with itself
cannot be logically turned off, it would then follow that all quantum
states in the quantum world are comprised of conscious experiences.
Thus, the resulting \emph{quantum panexperientialism} provides a mental fabric
for physical reality in which complex minds could evolve naturally
from simpler minds \cite{Georgiev2017}.

\begin{theorem}
\label{thm:4}
Quantum reductionism implies that the conscious mind
is causally potent and affects the physical world. Such conscious
mind possesses free will because it is able to choose among future courses of action.
\end{theorem}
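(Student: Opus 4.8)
The plan is to prove the theorem in two parts that mirror the bipartite structure of Theorem~\ref{thm:2}, reusing its causal-potency argument essentially verbatim while replacing its determinism-based verdict on free will with the genuine indeterminism supplied by the stochastic differential equations of quantum dynamics.

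For the first part (causal potency), I would argue exactly as in Theorem~\ref{thm:2}. The mind--brain identity $\vert \Phi(t)\rangle = \vert \Psi(t)\rangle$ holds for all $t \geq 0$, and the quantum brain state $\vert \Phi(t)\rangle$ appears explicitly inside the system of SDEs governing the joint brain--world dynamics, since the brain is part of the world. The logical identity of an entity with itself cannot be turned off: it is always true that $\vert \Phi(t)\rangle = \vert \Phi(t)\rangle$, so the mental nature of the brain state cannot be deleted without negating the state itself. Hence the subtraction argument of Theorem~\ref{thm:1} is inapplicable, and the conscious mind $\vert \Psi(t)\rangle$, being literally the quantum brain state, is causally potent upon the physical world.

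For the second part (free will), I would depart from Theorem~\ref{thm:2} precisely at the point where determinism was invoked. Under quantum reductionism the dynamics proceeds by a sequence of quantum jumps
\begin{equation}
\vert \Psi(0)\rangle \hookrightarrow \vert \Psi(t_1)\rangle \hookrightarrow \vert \Psi(t_2)\rangle \hookrightarrow \ldots \hookrightarrow \vert \Psi(t_n)\rangle \hookrightarrow \ldots
\end{equation}
governed by SDEs rather than ODEs. By Example~\ref{exa:2} and Theorem~\ref{thm:3}, two distinct runs almost surely yield a non-zero subtraction, so the future is not fixed by the initial state. At each jump the Born rule assigns probabilities $\vert \alpha_i\vert ^2$ to the alternative eigenstate outcomes, and the mind--brain identity entails that the agent performing this selection is the conscious mind itself. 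The conscious mind therefore chooses among genuinely available future courses of action, which is the operational content of free will.

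The main obstacle I anticipate is conceptual rather than computational: establishing that the indeterministic Born-rule selection constitutes a \emph{free choice} exercised by the mind rather than mere externally imposed randomness. The argument must show that, under the identity $\vert \Phi\rangle = \vert \Psi\rangle$, there is no external chooser distinct from the experiencing subject---the quantum state that jumps is the same entity as the mind that chooses---thereby sidestepping both the \emph{ad hoc} external-agency objection raised against quantum functionalism and the falsifiability objection that we do not introspectively know which observable brain state we select. The choice is made at the level of the quantum state (the experience), not at the level of the observable brain variables that the mind need not consciously enumerate.
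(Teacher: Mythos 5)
Your proposal is correct and follows essentially the same route as the paper's own proof: causal potency is obtained trivially from the identity $\vert\Phi(t)\rangle=\vert\Psi(t)\rangle$ (mirroring Theorem~\ref{thm:2}), and free will is obtained from the genuinely stochastic quantum dynamics, with the mind itself as the internal agent of each choice. You elaborate the second part somewhat more than the paper does in the proof body, but the added material is drawn from the surrounding discussion (Sections~\ref{S5} and \ref{S6} and the remarks on genuine versus effective stochasticity), so there is no substantive divergence.
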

\begin{proof}
Since the quantum brain state and the conscious mind are identical,
$\vert \Phi(t)\rangle=\vert \Psi(t)\rangle$, for all $t$, it follows
trivially that the initial state of the mind $\vert \Psi(0)\rangle$ affects
the future quantum state of the brain $\vert \Phi(t)\rangle$. The brain
is part of the world, which means that the mind affects the world.
The possession of free will follows from the stochastic dynamics.
A quantitative measure for the amount of manifested free will is provided
by the expected information gain from learning the actual sequence
of choices made by the conscious mind (for~details, see Sections~\ref{S5} and \ref{S6}).
\end{proof}

One of the advantages of quantum reductionism is that because the
quantum brain--mind identity relation goes in both ways, it would
follow that the conscious mind has all of the properties of a quantum
state and satisfies the axioms of a vector in Hilbert space. To highlight
this fact, we no longer use the bare symbol ``$\Psi$'' for
the conscious mind, but rather insert it inside a ket $\vert \Psi\rangle$
following Dirac's bra-ket notation \cite{Dirac1939,Dirac1967}. This is highly
informative from a theoretical perspective because one becomes equipped
with quantum information theoretic \emph{no-go theorems} that can
be applied to consciousness in order to determine some of its physical properties \cite{Georgiev2017,Melkikh2019}.
For example, the quantum state vector of quantum physical systems is not observable due to a theorem by Busch \cite{Busch1997}, which in turn explains why the conscious
experiences are not observable \cite{Georgiev2017,Georgiev2020a,Georgiev2020b}.
Deriving the ``unobservability'' of conscious experiences
as a prediction from the physical theory of consciousness is a remarkable
achievement, especially when compared to classical reductionism, which
is ripped by the internal inconsistency between the \emph{unobservable mind}
and the \emph{observable brain}.

Another achievement of quantum reductionism is the ability to explain
the physical difference between the unobservable mind and the observable
brain (Fig.~\ref{fig:5}). In a quantum world, what can be observed
is different from what exists in the form of a quantum state $\vert \Psi\rangle$.
This physical difference is reflected in the mathematical formalism
of quantum theory where quantum observables are described by operators
$\hat{A}$ on Hilbert space~$\mathcal{H}$. The observed outcomes
in quantum measurements are the eigenvalues of the measured quantum
brain observables. Because the eigenvalues are just numbers, these
can be represented by bits of classical information and stored on
a digital file as a string of 0s and 1s. This classical information
is the ``observable brain'' and can be communicated to multiple
external observers. For example, a microscopic picture of the anatomical
organ (brain) that is inside your skull shows a neural network of
neurons \cite{DeFelipe2002}. This neuronal picture is different from the conscious experiences
that exists in reality and is compatible with the fact that we do
not introspectively perceive ourselves as a collection of neurons.
In other words, the picture of neurons is not the conscious mind itself,
but what the conscious mind looks like from a third-person point of
view. The same picture of the observable brain can be copied, multiplied, communicated to
and simultaneously studied by multiple neuroscientists. The amount
of classical bits of information that can be obtained while observing
the quantum state of the brain $\vert \Psi\rangle$ is bound by Holevo's theorem in
quantum information theory \cite{Holevo1973,Georgiev2020a,Georgiev2020b}.

Clear conceptual distinction between the ``unobservable'' mind and the ``observable'' brain is preserved by avoiding the usage of a brain density matrix~$\hat{\rho}$, which is a \emph{quantum observable}. For a pure brain state, i.e., $\textrm{Tr}(\hat{\rho}^2)=1$ it follows that $\hat{\rho} = \vert \Psi\rangle\langle\Psi\vert $ hence the state can be equivalently written as a ket vector $\vert \Psi\rangle$.
Component parts of a composite quantum entangled system, however, do not have their own ket vectors, hence do not have their own minds, whereas they always have a reduced density matrix $\hat{\rho}$ that is not pure, i.e., $\textrm{Tr}(\hat{\rho}^2)<1$ \cite{Georgiev2017}. In other words, possessing a density matrix is not something special that is useful for \emph{demarcation of mind boundaries} because any collection of quantum particles is guaranteed to have a reduced density matrix. On the other hand, it is indeed something special for a collection of quantum particles to have their collective quantum state vector $\vert \Psi\rangle$ because in general not every collection of quantum particles is guaranteed to have a quantum state vector. What is more, the factorizability of $\vert \Psi\rangle$ is informative with regard to mind boundaries, namely, a nonfactorizable $\vert \Psi\rangle$ corresponds to a single mind, whereas a factorizable $\vert \Psi\rangle=\vert \psi_1\rangle \otimes \vert \psi_2\rangle\otimes \ldots \otimes \vert \psi_k\rangle$ corresponds to a collection of $k$ separate minds \cite{Georgiev2017,Georgiev2021b} (see also Section~\ref{S7}).

\begin{figure}
\begin{centering}
\includegraphics[width=\textwidth]{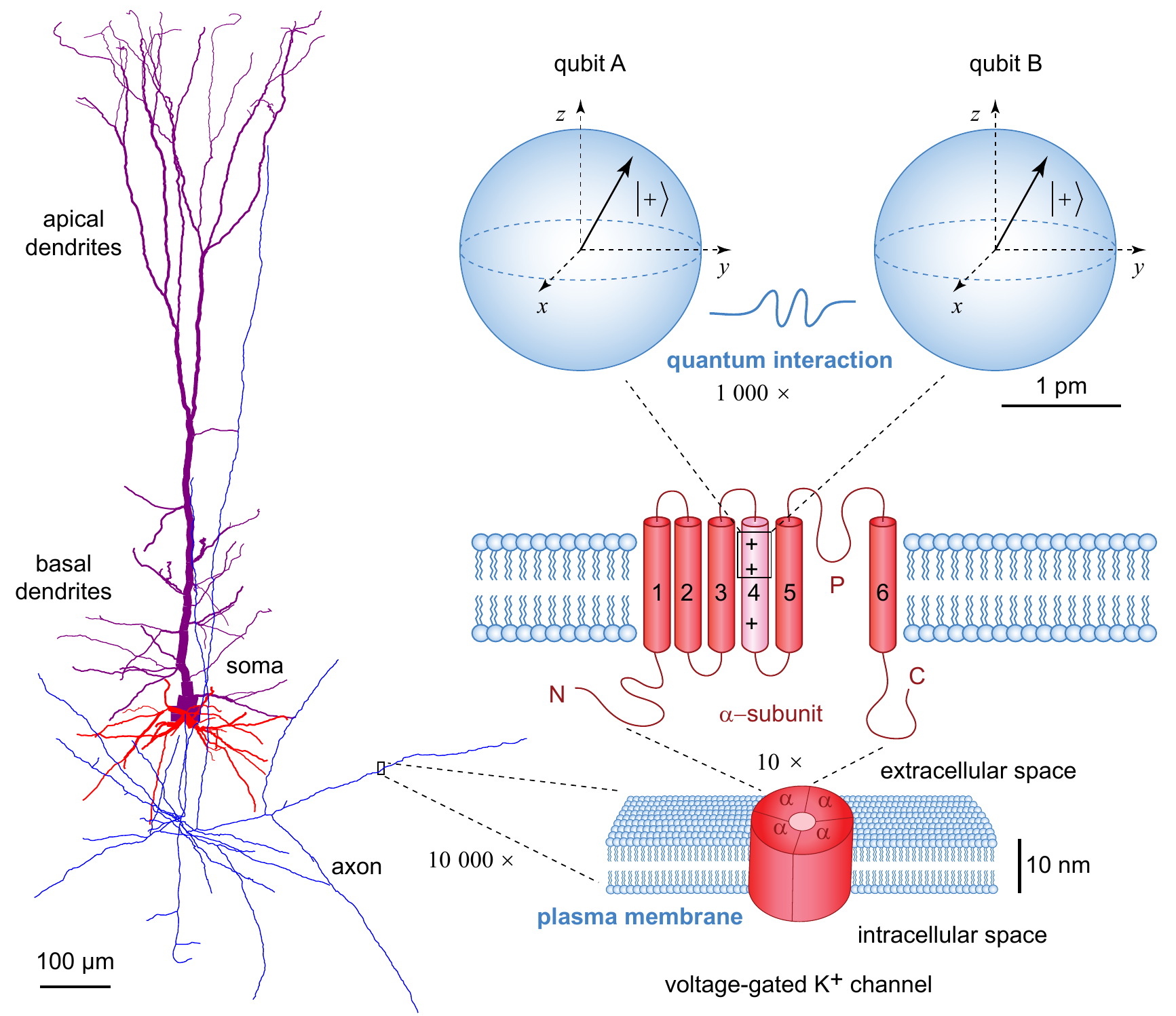}
\par\end{centering}

\caption{\label{fig:5}Different levels of organization of physical processes
within the central nervous system. At the microscopic scale, the brain
cortex is composed of neurons, which form neural networks. The morphology
of the rendered pyramidal neuron (\href{https://neuromorpho.org/neuron_info.jsp?neuron_id=NMO_77905}{NMO\_77905}) from layer~5 of rat
somatosensory cortex (http://NeuroMorpho.Org) reflects the functional
specialization of dendrites and axon for the input and output of electric
signals, respectively. At the nanoscale, the electric activity of
neurons is generated by voltage-gated ion channels, which are inserted
in the neuronal plasma membrane. As an example of ion channel is shown
a single voltage-gated K\protect\textsuperscript{+} channel composed
of four protein $\alpha$-subunits. Each subunit has six $\alpha$-helices
traversing the plasma membrane. The 4th~$\alpha$-helix is positively
charged and acts as voltage sensor. At the picoscale, individual elementary
electric charges within the protein voltage sensor could be modeled
as qubits represented by Bloch spheres. For the diameter of each qubit
is used the Compton wavelength of electron. Consecutive magnifications
from micrometer ($\mu$m) to picometer (pm) scale are indicated by
$\times$ symbol. Modified from Ref.~\citen{Georgiev2021b}.}
\end{figure}

\begin{example}
(Quantum reductionism forbids mind uploading) Within the quantum reductive approach,
profound insights into the nature of consciousness arise from the characterization
of every quantum state~$\vert \Psi\rangle$ as a quantum coherent superposition
of quantum probability amplitudes for potential future quantum events to occur.
The \emph{actualization} of one of these potential events occurs
during quantum measurements with their associated quantum jumps. For
example, consider the chemical composition of the voltage sensor of
voltage-gated K\textsuperscript{+} channel, which is built up from
carbon~(C), nitrogen~(N), oxygen~(O) and hydrogen~(H) atoms \cite{Kariev2014,Kariev2018,Kariev2021} (Fig.~\ref{fig:5}).
The positively charged hydrogen nuclei (protons) in the positively
charged Lysine or Arginine amino acid residues can be located in different
locations in space, the conformation of which determines whether the
voltage-gated K\textsuperscript{+} channel is in open or closed state \cite{Sansom2000,Catterall2010,Islas2016}.
Different conformations of Lysine or Arginine are realized by different
distributions of C, N, O and H atoms. Each distribution could be realized
with quantum probability given by the absolute square of the corresponding
quantum probability amplitudes. Importantly, these quantum probability
amplitudes are non-zero only for the mentioned C, N, O and H atoms,
but are zero for atoms of other chemical elements.
This leads to impossibility
of the quantum probability amplitudes to be separated from their physical
substrate. In modern science fiction scenarios, it is often imagined
that human consciousness could be uploaded onto silicon-based computer
chips \cite{Bamford2012,Choe2012,Depp2014,Cappuccio2017}.
Quantum reductionism, however, forbids physically such possibility. Indeed, let us imagine
for the sake of argument that the quantum state of the voltage-gated
K\textsuperscript{+} channel could be cloned onto the silicon chip.
Chemically, the silicon chip is comprised of silicon (Si) atoms. If
the quantum state of the silicon chip has perfectly become the quantum
state of the voltage-gated K\textsuperscript{+} channel, then it
would contain zero quantum probability amplitude for measurement of
silicon (Si) atoms. Consequently, after we interact the silicon chip
to measure it, we will observe the actualization of a voltage-gated
K\textsuperscript{+} channel. Similarly, if
we were able to clone the complete quantum state of the brain onto
a silicon chip, the silicon chip would turn into an organic brain
tissue upon observation. Logically, the quantum probability amplitude
for actualization of a brain does actualize a brain, but does not
actualize a silicon chip. That is why the quantum information contained
in the quantum probability amplitudes of a quantum state $\vert \Psi\rangle$
is fundamentally inseparable from its physical substrate. The quantum
probability amplitudes are for the actualization of something, and
this ``something'' is what we call the physical substrate
of the quantum state.
\end{example}

Quantum stochastic dynamics of the quantum brain state $\vert \Psi(t)\rangle$,
which is comprised of conscious experiences, solves at once both the
causal potency problem and the free will problem \cite{Georgiev2021}. If one performs
a series of quantum measurements upon the quantum brain state $\vert \Psi(t)\rangle$,
the result will be a sequence of actualized outcomes
\begin{equation}
\vert \Psi(t)\rangle=\vert \Psi(0)\rangle\hookrightarrow\vert \Psi(t_1)\rangle\hookrightarrow\vert \Psi(t_2)\rangle\hookrightarrow\ldots\hookrightarrow\vert \Psi(t_{n})\rangle\hookrightarrow\ldots\label{eq:12}
\end{equation}
The mathematics of stochastic differential equations does not have
a placeholder for indicating the agent that makes the choices. This
makes the mathematical theory generally applicable to many different
contexts where the agent could be either \emph{internal} or \emph{external}
to the system.

\begin{example}
(Genuinely stochastic dynamics) For the particular context when the
agent is the modeled system itself, we say that the dynamics is \emph{genuinely}
stochastic and the modeled system possesses free will because it is
able to make choices from a set of physically possible future courses
of action.
\end{example}

\begin{example}
(Effectively stochastic dynamics) In the context when the agent is
external to the modeled system, we say that the dynamics is \emph{effectively}
stochastic but the modeled system does not posses free will because
it is unable to make choices. For example, consider two human players
playing a chess game. When only the chess pieces are described, they move stochastically on the
chess board because the two human players make choices and use the
muscles of their hands to move the chess pieces. Since each chess
piece is not an active agent itself, it exhibits effective stochastic
dynamics where the word ``effective'' means that it only ``looks
like'' stochastic dynamics due to the fact that the external
cause is left out from the description. Indeed, if the two human players
are included in the description and their choices are described by stochastic
processes, then the motion of the chess pieces is completely deterministic
as they ``copy'' exactly the outcome of the external human choice. The determinism
of the copying action is due to the presence of conditional probabilities
that are only selected from the set $\{0\%,100\%\}$.
\end{example}

\begin{example}
(Simulated stochastic dynamics) In the special context when we are
the external agent to a studied genuinely stochastic system, we say
that we perform a simulation. In the process of solving a system of
stochastic differential equations (SDEs), we make weighted choices
ourselves and attribute the obtained results to the simulated system
because it \emph{could have produced} genuinely the same outcomes
with the same probability. Of course, when we use computer programs
to perform the simulations we do not even make weighted choices ourselves,
but rather relegate the task to \emph{pseudorandom number generator},
which is perfectly deterministic process that produces outcomes with
the required statistics. Because we do not know how the pseudorandom
number generator works and do not control directly its initialization
state, we pretend that our ``ignorance'' of the outcome of
the deterministic pseudorandom number generation is a good enough
substitution for the genuine stochastic process that generates truly
random numbers.
\end{example}

In the physical world, only quantum systems could exhibit genuinely
stochastic behavior. The Brownian motion of a classical particle in
a fluid could be effectively modeled with a stochastic differential
equation (SDE), but the resulting trajectory is deterministic if the
exact positions and velocities of the rest of the particles in the
fluid are taken into account \cite{Langevin1908}. In other words,
all the information that is needed to predict exactly the trajectory
of the Brownian particle is available somewhere in the environment.
If one wants to transmit securely a secret message along a cryptographic
channel, it would be unwise to rely on effective stochasticity because
someone may find a way to extract the hidden information from the
environment. Instead, one could use a genuine quantum system, which
is guaranteed to produce truly random numbers \cite{Gabriel2010,Zhang2021}.
Thus, quantum systems are the only physical systems that could exhibit
genuinely stochastic dynamics and manifest their free will \cite{Georgiev2021}.

The action of the conscious mind on its own brain is no longer mysterious
in quantum reductionism. In quantum functionalism, it was assumed
that it is the brain that produces the mind and then the mind should
somehow find a way to affect its own brain. In quantum reductionism,
the roles are reversed and in a fundamentally mental physical world
it is the mind that produces the ``observable brain''. The
conscious mind is the set of all physical potentialities that could
be actualized during quantum measurement, whereas the observable brain
is the classical information that characterizes the actual outcome
that has been chosen by the mind. Thus, it is logically impossible
for the actualized mind's choice not to be that mind's choice. For
example, if Alice's observable brain is the actualized choice of Alice's
mind, it is impossible for Alice to choose Bob's observable brain,
because Alice's mind is not Bob's mind.

The most challenging problem met by quantum functionalism (cf. Section~\ref{sub:41}) was the
empirical fact that we do not knowingly choose our observable brain
state. When we choose to move our hand, we experience a desire to
voluntary move our hand without knowing which neuron in our motor
brain cortex is firing electrically. In quantum reductionism, there
is no mystery to such introspective testimony because the quantum
jumps given in Eq.~\eqref{eq:12} describe a stochastic dynamics from
one conscious experience to another conscious experience.
For example, we have a desire to consciously move our hand and then we evolve into a conscious
state in which we have triggered the hand motion. The conscious state
with triggered hand motion does not also have extra self-referential information
of how our brain looks like from a third-person perspective.
Quantum measurement theory \cite{Braginsky1992,Busch1996} further makes it clear that a quantum state does not observe itself because a quantum state cannot measure itself.
The quantum measurement generates communicable information, i.e. classical bits of information \cite{Georgiev2017}.
When the conscious mind experiences certain qualia, e.g. the redness of a red rose,
no communicable information is generated with respect to the phenomenological nature of those qualia, e.g. what is it like to experience the redness of a red rose.
Therefore, the conscious mind \emph{experiences} itself, but \emph{does not observe} itself in the technical sense of the word ``observation'' understood as ``physical measurement'' \cite{Georgiev2020a,Georgiev2020b}.
Furthermore, the conscious mind should not have an extra knowledge of what the
observable brain is, because the brain picture is dependent on the measuring instrument.
For example, microscopic image of pyramidal neurons in the brain cortex
could be taken at different magnification and with different resolution,
the ongoing electrophysiological processes in dendrites, axons and synaptic junctions could be described in molecular
language \cite{Johnston1995,Melkikh2019}, and so on.
Holevo's theorem in quantum information theory puts a strict upper bound on
the amount of classical information that can be obtained by an external
observer from a given quantum state $\vert \Psi\rangle$. Thus, the ``observable
brain'' is nothing but the physically admissible upper limit
of classical information that can be communicated to an external observer
with regard to each and every actualized choice of the dynamically
evolving conscious mind $\vert \Psi(t)\rangle$.

\section{Free will and stochastic processes}
\label{S5}

\begin{definition}
(Free will) \emph{Free will} is the inherent capacity of a physical
system to make genuine choices among at least two physical outcomes
\cite{Georgiev2017,Georgiev2021}.
\end{definition}

The \emph{act of choosing} always selects \emph{one actualized outcome}
from \emph{several available possible outcomes}. Consequently, the
information obtained from a \emph{single choice} is insufficient for
an external observer to differentiate between a deterministic system
and a system endowed with free will. This is because the single choice
produces a single outcome without any accompanying evidence of the
physical existence of the other potential alternative outcomes. Instead,
the external observer needs to collect information from \emph{repeated
choices} performed by the target system and accumulate a statistical
probability distribution for different actualized outcomes. If the
probability distribution is 100\% peaked onto a single outcome, then
the system is guaranteed to be deterministic. However, if the probability
distribution is spread over several outcomes, then this could be a
manifestation of the target system making genuine choices among those
several outcomes. Whether or not the target system actually makes
choices can only be answered by the physical laws that describe the
properties of physical reality. If the physical laws allow/disallow
the capacity for making genuine choices, then real physical systems
can/cannot make genuine choices. The mathematical representation of the
\emph{act of choosing} necessitates the adoption of a \emph{generalized
type of processes} known as \emph{stochastic processes}.

\begin{definition}
(Stochastic process) A \emph{stochastic process}~$\Gamma$ is a generalized
type of process that implements
the actualization of a single outcome $x_{a}$ selected from a sample
space $X=\{x_{1},x_{2},\ldots,x_{n}\}$, where the probability weights
for the alternative outcomes are given by corresponding probability
distribution $P=\{p_{1},p_{2},\ldots,p_{n}\}$ with normalized sum $\sum_{n}p_{n}=1$.
\end{definition}

In modern mathematical software, the function that generates an instance
of a stochastic process is often called \emph{weighted random choice}.
Typically, the latter terminology does not create any confusion because
the majority of mathematical models of stochastic phenomena focus
on the presence of external noise due to uncontrollable factors in
the environment. The attachment of the word \emph{random} to the phrase
\emph{weighted choice}, however, is misleading and may obscure the
fact that a stochastic process~$\Gamma$ can generate a completely
deterministic trajectory. In fact, the most important ingredient in
the definition of the stochastic process~$\Gamma$ is the probability
distribution $P=\{p_{1},p_{2},\ldots,p_{n}\}$ and it can continuously
vary from a completely biased distribution peaked onto a single outcome
to completely unbiased distribution equally spread over all possible
outcomes. Therefore, without having any information about the probability
distribution $P=\{p_{1},p_{2},\ldots,p_{n}\}$, the \emph{stochastic
process}~$\Gamma$ \emph{can be any process} including a completely
deterministic one.

\begin{example}
(Stochastic processes with different bias) The simplest stochastic
process~$\Gamma$ can be realized with a sample space consisting of
only two outcomes $X=\{0,1\}$, each of which can be realized with
corresponding probability weight given by $P=\{p_{0},p_{1}\}$. The
bias in the two-outcome probability distribution can be quantified as
\begin{equation}
\mathscr{B}=p_{1}-p_{0}\label{eq:bias}
\end{equation}
with range $\mathscr{B}\in[-1,1]$, where $\mathscr{B}=1$ indicates
a completely biased (deterministic) distribution such that $x=1$
always occurs, $\mathscr{B}=-1$ indicates a completely biased (deterministic)
distribution such that $x=1$ never occurs, and $\mathscr{B}=0$ indicates
a completely unbiased (indeterministic) distribution such that $x=0$
and $x=1$ are equally likely to occur. Individual stochastic runs
of $n=100$ repetitions of the weighted choice with different values
of the bias $\mathscr{B}$ are shown in Fig. \ref{fig:6}.
Comparative analysis of the obtained stochastic plots for different values of
the bias $\mathscr{B}$ clearly demonstrates that the number of transitions
between $x=0$ and $x=1$ decreases with the increase of the bias towards unity, $\mathscr{B}\to1$,
and in the limit of complete bias, $\mathscr{B}=1$,
the trajectory generated by the stochastic process is completely deterministic.
The Kolmogorov complexity and algorithmic randomness (incompressibility)
of the realized sequences of 0s and 1s also decrease as $\mathscr{B}\to1$
due to the appearance of longer and longer strings of consecutive~1s.
\end{example}

\begin{figure}
\begin{centering}
\includegraphics[width=\textwidth]{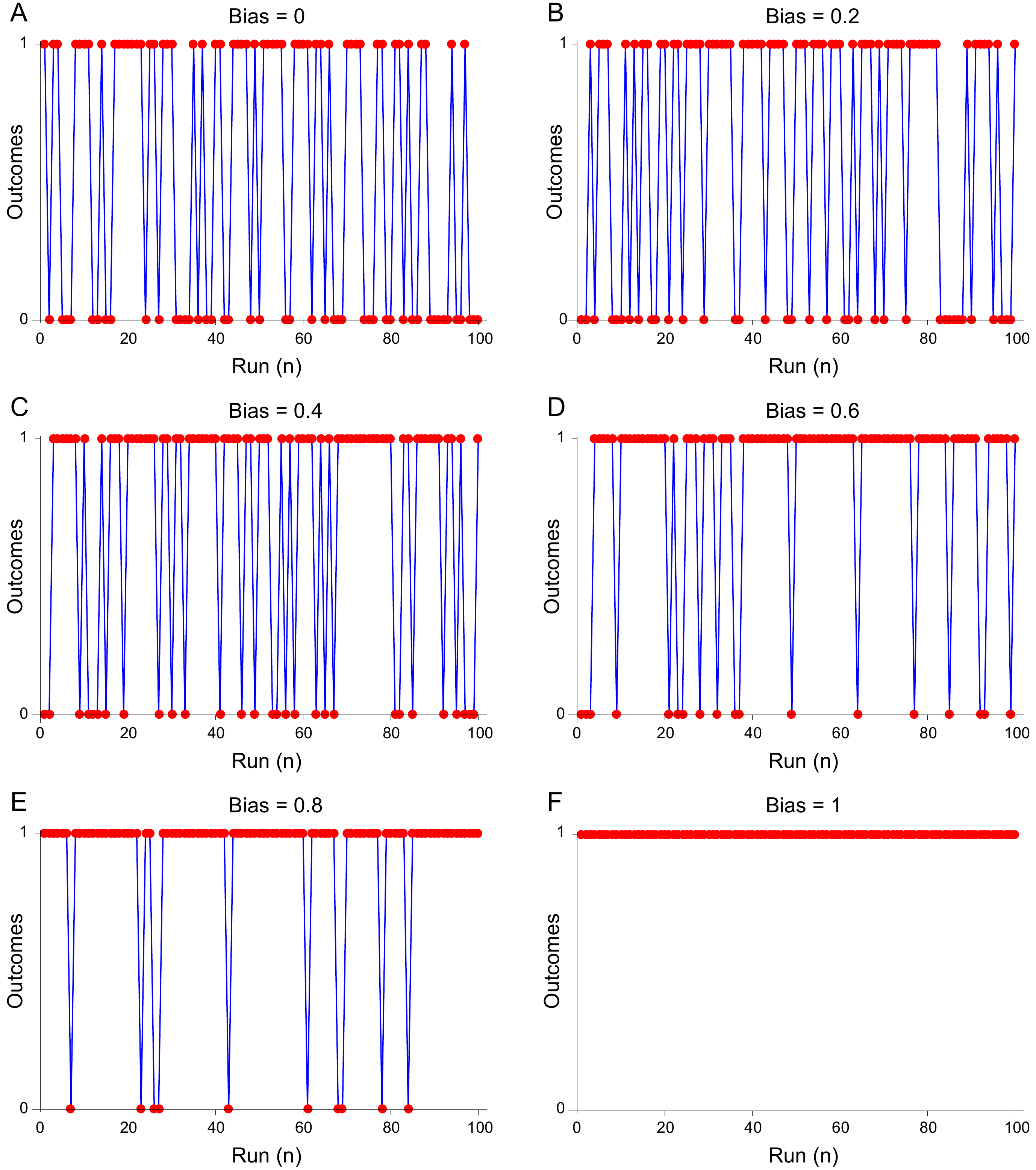}
\par\end{centering}
\caption{\label{fig:6}Two-outcome stochastic processes~$\Gamma$ with different
levels of the bias $\mathscr{B}$ in favor of the outcome $x=1$ over
the outcome $x=0$. Each weighted choice was repeated for $n=100$
times. The individual outcomes at each time point are shown as red
points, whereas consecutive outcomes are connected with a blue line
that serves as a visual guide for transitions between the two outcome
values. For different levels of the bias $\mathscr{B}$ the stochastic
process~$\Gamma$ can produce any trajectory from completely random
to completely deterministic.}
\end{figure}

Now, after we have shown that the term \emph{stochastic}
does not by itself imply anything concrete with regard to \emph{randomness}
or \emph{determinism} without taking into account the actual values
of the \emph{bias} $\mathscr{B}$, we are ready to discuss how the
free will is constrained by the presence of non-zero bias $\left|\mathscr{B}\right|\neq0$.
Also, we will show that randomness is the external manifestation of
free will as perceived by observers outside the target agent.

\begin{example}
(External manifestation of free will) Suppose that Alice and Bob are
agents endowed with free will and each one performs a series of $n=100$ completely
unbiased choices of $X=\{0,1\}$ with $P=\{p_{0}=\frac{1}{2},p_{1}=\frac{1}{2}\}$,
hence $\mathscr{B}=0$. From the perspective of Bob, the sequence
of outcomes chosen by Alice is going to be statistically random due
to approximately equal occurrence of 0s and 1s. From the point of
view of Alice, there is nothing random but only a manifestation of
her own free will because she is the agent who made each of the choices
and for each choice she was able to choose otherwise. When Bob performs
his series of $n=100$ completely unbiased choices, the roles are
reversed, namely, it was Bob who made the choices and manifested his
free will, while Alice perceives Bob's choices to be random. The sequences
of chosen outcomes produced by Alice or Bob are statistically indistinguishable
in terms of \emph{algorithmic randomness} or \emph{incompressibility},
which is maximal. Thus, \emph{randomness} by itself as a statistical
property of the generated outcomes is not something that is incompatible
with \emph{free will}. What is important for the attribution of free
will is the location of the physical source of stochasticity, whether
it is inside Alice or inside Bob.
\end{example}

\begin{definition}
(Amount of free will) The \emph{amount of free will} $\mathscr{F}$
of a target agent can be quantified by an external observer as the
expected information gain in bits from learning the actualized outcome
chosen by the target agent \cite{Georgiev2021}
\begin{equation}
\mathscr{F}=-\sum_{i}p_{i}\log_{2}p_{i}\label{eq:freewill}
\end{equation}
\end{definition}

The amount of free will $\mathscr{F}$ is strongly affected by the
presence of an inherent bias $\mathscr{B}$, which leads to greater
preference by the agent for one outcome over alternative outcomes
\cite{Georgiev2021}. For the simplest stochastic process~$\Gamma$
with a sample space consisting of only two outcomes $X=\{0,1\}$,
the individual probabilities can be expressed in terms of the bias
\eqref{eq:bias} as follows
\begin{align}
p_{0} & =\frac{1-\mathscr{B}}{2}\\
p_{1} & =\frac{1+\mathscr{B}}{2}
\end{align}
For $\mathscr{B}=0$, the amount of free will $\mathscr{F}$ is maximal
\begin{equation}
\mathscr{F}\left(\mathscr{B}=0\right)=-2\times\frac{1}{2}\log_{2}\frac{1}{2}=1
\end{equation}
whereas for $\left|\mathscr{B}\right|=1$, the amount of free will
$\mathscr{F}$ is zero
\begin{equation}
\mathscr{F}\left(\left|\mathscr{B}\right|=1\right)=-0\log_{2}0-1\log_{2}1=0
\end{equation}

\begin{theorem}
\label{thm:nF}The amount of free will for $n$ repeated sequential choices of
a stochastic process~$\Gamma$ with $k$ possible outcomes $X=\{x_{1},x_{2},\ldots,x_{k}\}$
with probability weights $P=\{p_{1},p_{2},\ldots,p_{k}\}$ at each
single time is cumulative and adds up to $n\times\mathscr{F}_{0}$,
where $\mathscr{F}_{0}=-\sum_{i=1}^{k}p_{i}\log_{2}p_{i}$ is the
amount of free will for each individual choice.
\end{theorem}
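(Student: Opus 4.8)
The plan is to recognize this statement as the additivity of Shannon entropy over independent, identically distributed choices, and to verify it by a direct computation on the joint distribution. The single-choice quantity $\mathscr{F}_0 = -\sum_{i=1}^{k} p_i \log_2 p_i$ is exactly the amount of free will defined in Eq.~\eqref{eq:freewill}, so I would define the total free will $\mathscr{F}_n$ over $n$ sequential choices as the expected information gain from learning the entire ordered sequence of $n$ actualized outcomes, that is, the Shannon entropy of the joint probability distribution on the product sample space $X^n$.

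First I would set up that joint sample space. Since the $n$ choices are drawn sequentially and independently from the same stochastic process $\Gamma$, an elementary joint outcome is an ordered tuple $(x_{j_1},\ldots,x_{j_n})$ with each $j_m \in \{1,\ldots,k\}$, so $X^n$ contains $k^n$ elements. Independence of the successive choices gives the factorized joint probability
\begin{equation}
P(x_{j_1},\ldots,x_{j_n}) = \prod_{m=1}^{n} p_{j_m},
\end{equation}
and the total free will is then the joint entropy
\begin{equation}
\mathscr{F}_n = -\sum_{j_1=1}^{k}\cdots\sum_{j_n=1}^{k}\left(\prod_{m=1}^{n} p_{j_m}\right)\log_2\left(\prod_{m=1}^{n} p_{j_m}\right).
\end{equation}

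The key algebraic step is to replace the logarithm of the product by a sum of logarithms, $\log_2\!\big(\prod_m p_{j_m}\big) = \sum_{l=1}^{n}\log_2 p_{j_l}$, then interchange this finite sum over $l$ with the nested sums over the tuple indices. For each fixed $l$ I would factor the $l$-th coordinate out of the product, writing the inner multiple sum as
\begin{equation}
\left(\sum_{j_l=1}^{k} p_{j_l}\log_2 p_{j_l}\right)\prod_{m\neq l}\left(\sum_{j_m=1}^{k} p_{j_m}\right).
\end{equation}
By the normalization $\sum_{i=1}^{k} p_i = 1$, every factor with $m\neq l$ collapses to unity, so the inner sum reduces to $\sum_{j_l} p_{j_l}\log_2 p_{j_l} = -\mathscr{F}_0$; restoring the overall minus sign, each of the $n$ values of $l$ contributes exactly $\mathscr{F}_0$, and summing over $l$ yields $\mathscr{F}_n = n\,\mathscr{F}_0$.

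I do not expect a genuine obstacle, since this is the classical additivity of entropy; the only point demanding care is the bookkeeping of the nested multi-index summations and the justification that the normalization identity collapses all but one factor. The one conceptual assumption worth flagging is independence of the successive choices, which is what the phrase \emph{repeated sequential choices of a stochastic process $\Gamma$} encodes: each choice draws afresh from the same distribution $P$ with no conditioning on earlier outcomes. Were the choices correlated, the joint entropy would in general be strictly less than $n\,\mathscr{F}_0$ by subadditivity, so independence is precisely the hypothesis that makes the free will exactly cumulative.
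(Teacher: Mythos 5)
Your proposal is correct and follows essentially the same route as the paper's own proof: treat the $n$ sequential choices as a single choice over the product sample space $X^n$ with factorized probabilities, expand $\log_2$ of the product into a sum, and use the normalization $\sum_i p_i = 1$ to collapse all but one factor in each term, yielding $n\,\mathscr{F}_0$. Your explicit flagging of the independence assumption is a worthwhile clarification but does not change the argument.
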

\begin{proof}
Any stochastic processes with $k$ possible outcomes $X=\{x_{1},x_{2},\ldots,x_{k}\}$
at a single time with probability weights $P=\{p_{1},p_{2},\ldots,p_{k}\}$,
repeated at $n$ times can be viewed as a single choice of the entire
history of $n$ outcomes $p_{i_{1}}p_{i_{2}}\ldots p_{i_{n}}$. Thus,
the amount of free will is
\begin{align}
\mathscr{F} & =-\sum_{i_{1}=1}^{k}\sum_{i_{2}=1}^{k}\cdots\sum_{i_{n}=1}^{k}p_{i_{1}}p_{i_{2}}\ldots p_{i_{n}}\log_{2}\left(p_{i_{1}}p_{i_{2}}\ldots p_{i_{n}}\right)\nonumber \\
 & =-n\left(p_{1}+p_{2}+\ldots+p_{k}\right)^{n-1}\sum_{i=1}^{k}p_{i}\log_{2}p_{i}\nonumber \\
 & =-n\times1^{n-1}\sum_{i=1}^{k}p_{i}\log_{2}p_{i}=n\times\mathscr{F}_{0}
\end{align}
where we have used the fact that the probability weights are normalized
to unity, $\sum_{i=1}^{k}p_{i}=1$, together with the logarithmic
product property converting log of a product into a sum of logs, namely,
$\log(p_{1}p_{2})=\log p_{1}+\log p_{2}$.
\end{proof}
An important consequence of Theorem \ref{thm:nF} is that as long
as the process is not deterministic, even strongly biased stochastic
processes with very small but non-zero amount of free will per individual
choice can produce arbitrary large changes in the resulting dynamics
if the choices are repeated sufficient number of times. In the sample
stochastic processes illustrated in Fig. \ref{fig:6}, the total amount
of free will exercised by the agent is $\mathscr{F}=100$ bits for
$\mathscr{B}=0$, $\mathscr{F}\approx97$ bits for $\mathscr{B}=0.2$,
$\mathscr{F}\approx88$ bits for $\mathscr{B}=0.4$, $\mathscr{F}\approx72$
bits for $\mathscr{B}=0.6$, $\mathscr{F}\approx47$ bits for $\mathscr{B}=0.8$,
and $\mathscr{F}=0$ bits for $\mathscr{B}=1$.

\section{Learning, biases and free will}
\label{S6}

The capacity for making choices is granted by physical laws. Because
the evolutionary processes are governed by the physical laws of the
universe, it is impossible for an organism to evolve a capacity that
miraculously breaks the physical laws. What can be evolutionary achieved,
however, is to acquire strategies for optimal utilization of the physical
laws \cite{Georgiev2017,Georgiev2021}.

In quantum mechanics, the measurement of quantum observables can generate
the whole range of quantum probability distributions varying from
completely biased to completely unbiased, where each probability distribution
grants a different amount of free will $\mathscr{F}$ according to
\eqref{eq:freewill}. From the Born rule, $P=\langle\Psi|\hat{A}|\Psi\rangle$
it is evident that the quantum probabilities change either by changing
the quantum state of the measured system $|\Psi\rangle$ or changing
the basis in which quantum measurement is performed, where the measurement
basis is fixed by the eigenvectors of the measured quantum observable
$\hat{A}$. Because typically the measured observable $\hat{A}$ and
the measurement basis will be fixed by the environment, the organisms
can acquire molecular mechanisms that modify the quantum state $|\Psi\rangle$
of their nervous system that undergoes repeated measurement by the
environment. The preparation of different quantum state $|\Psi(t)\rangle$
for subsequent measurement could be viewed as a form of \emph{knowledge
acquisition} or \emph{learning} as the time $t$ progresses. Next,
we will present the simplest quantum toy example involving a single
qubit that is measured to produce a two-outcome sample space $X=\{0,1\}$.

\begin{figure}[t!]
\begin{centering}
\includegraphics[width=\textwidth]{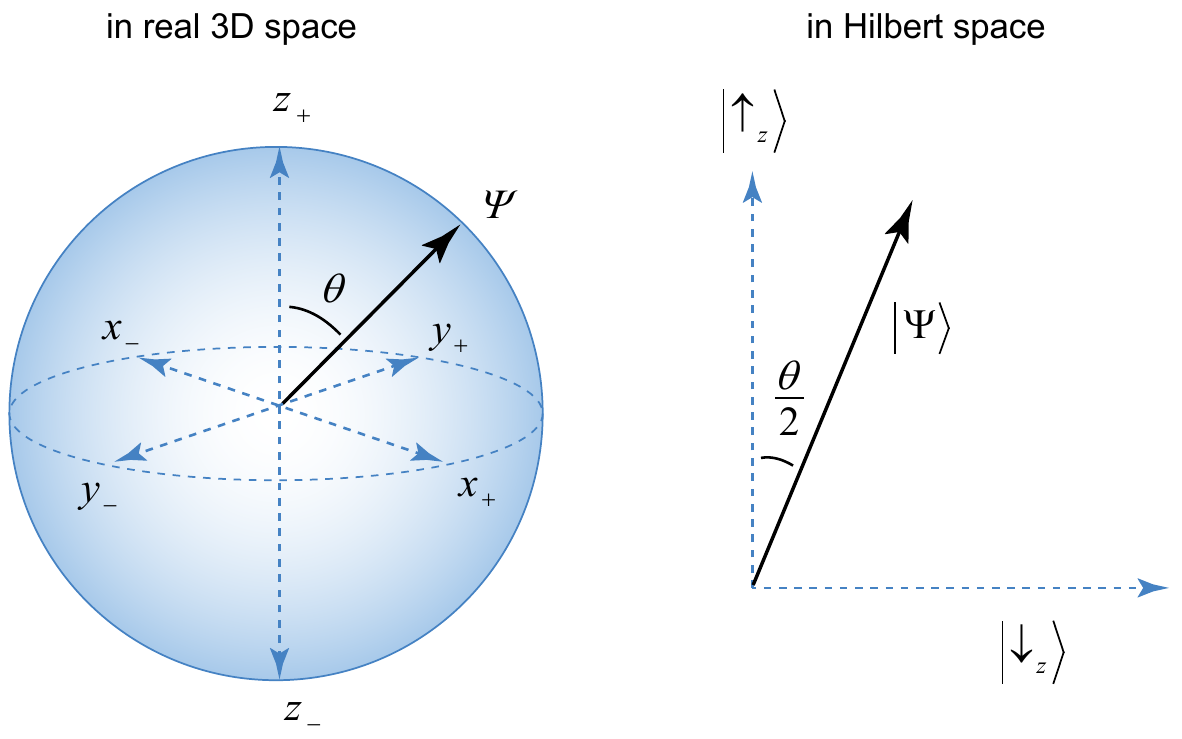}
\par\end{centering}
\caption{\label{fig:7}Quantum toy example illustrating the physical realization
of a stochastic process~$\Gamma$ with two-outcome sample space $X=\{0,1\}$,
eigenbasis of the quantum measurement given by $\{|\downarrow_{z}\rangle,|\uparrow_{z}\rangle\}$
and quantum probability distribution $P=\{p_{0},p_{1}\}$ given by
the Born rule.}
\end{figure}

\begin{example}
(Dynamic biases and free will)
Let the target system be a qubit (spin-$\frac{1}{2}$ particle) whose
quantum state $|\Psi\rangle$ resides in a two-dimensional complex
Hilbert space $\mathcal{H}$. Let the measurement by the environment
determine the orientation of the spin along a fixed axis, which we
can call the $z$-axis, in the real 3-dimensional space. The eigenbasis
of the measurement is given by $\{|\downarrow_{z}\rangle,|\uparrow_{z}\rangle\}$
and the two-outcome sample space $X=\{0,1\}$ is given by the eigenvalues
corresponding to each eigenvector of the measured observable
$\hat{A}= 1 |\uparrow_{z}\rangle\langle\uparrow_{z}| + 0 |\downarrow_{z}\rangle\langle\downarrow_{z}|$.
If the spin of the qubit points away at an angle $\theta$ from the
$z_{+}$-axis in the real 3D space (Fig.~\ref{fig:7}), the quantum
state of the qubit in Hilbert space can be expressed as
\begin{equation}
|\Psi\rangle=\cos\left(\frac{\theta}{2}\right)|\uparrow_{z}\rangle+\sin\left(\frac{\theta}{2}\right)|\downarrow_{z}\rangle
\end{equation}
where we have used the freedom to choose which direction in 3-dimensional space will be called $x_+$ thereby removing an inessential pure phase factor from one of the superposed states. The Born rule determines the two probability weights for the actualization
of each of the two possible outcomes 
\begin{align}
p_{0} & =\langle\Psi|\downarrow_{z}\rangle\langle\downarrow_{z}|\Psi\rangle=\sin^{2}\left(\frac{\theta}{2}\right)\\
p_{1} & =\langle\Psi|\uparrow_{z}\rangle\langle\uparrow_{z}|\Psi\rangle=\cos^{2}\left(\frac{\theta}{2}\right)
\end{align}
The weighted choice performed by the target qubit upon measurement
is described by a stochastic process~$\Gamma$ with two-outcome sample
space $X=\{0,1\}$ and probability weights $P=\{p_{0},p_{1}\}$ given
by the Born rule. If the target qubit is not supported with some form
of memory and it is \emph{repeatedly prepared} in the same state $|\Psi\rangle$
and then \emph{measured} \cite{Gudder2022,Busch1990} in the $\{|\downarrow_{z}\rangle,|\uparrow_{z}\rangle\}$
basis, the stochastic process will exhibit constant bias $\mathscr{B}$
in time
\begin{equation}
\mathscr{B}=\cos^{2}\left(\frac{\theta}{2}\right)-\sin^{2}\left(\frac{\theta}{2}\right)=\cos\theta
\end{equation}
with constant free will in time
\begin{equation}
\mathscr{F}=-\frac{1-\mathscr{B}}{2}\log_{2}\left(\frac{1-\mathscr{B}}{2}\right)-\frac{1+\mathscr{B}}{2}\log_{2}\left(\frac{1+\mathscr{B}}{2}\right)
\end{equation}
Inside the living brain endowed with memory and dopamine reward system,
however, the qubit may participate in knowledge acquisition or learning
in time $t$ and exhibit a dynamic bias $\mathscr{B}(t)$ and dynamic
amount of free will $\mathscr{F}(t)$. For example, if the outcome
$|\uparrow_{z}\rangle$ is followed by reward, the qubit state $|\Psi(\theta,t)\rangle$
can be prepared with angle $\theta(t)\to0$, increasing the bias $\mathscr{B}(t)\to1$
and limiting the amount of free will $\mathscr{F}(t)\to0$ with the
expectation that future measurement of the spin along the $z$-axis
will have an increased probability of choosing the outcome $|\uparrow_{z}\rangle$,
hence again receiving reward. In contrast, if the outcome $|\uparrow_{z}\rangle$
is followed by punishment, the qubit state $|\Psi(\theta,t)\rangle$
can be prepared with angle $\theta(t)\to\pi$, increasing the bias
in the opposite direction $\mathscr{B}(t)\to-1$ and limiting the
amount of free will $\mathscr{F}(t)\to0$ with the expectation that
future measurement of the spin along the $z$-axis will have an increased
probability of choosing the outcome $|\downarrow_{z}\rangle$, hence
avoiding receiving another punishment. In the absence of previous
experience or in the presence of conflicting information about the
consequences of the two possible outcomes, the qubit state $|\Psi(\theta,t)\rangle$
can be prepared with angle $\theta(t)\to\frac{\pi}{2}$, decreasing
the bias $\mathscr{B}(t)\to0$ and increasing the amount of free will
$\mathscr{F}(t)\to1$ so that either outcome can be chosen and the
consequences of each choice can be investigated through accumulation
of new knowledge. In conclusion, the greater certainty in accumulated
knowledge is associated with greater bias $\mathscr{B}(t)$ and lower
amount of manifested free will $\mathscr{F}(t)$.
\end{example}

The single qubit in isolation is necessarily too simple and does not
have the mechanisms of memory or reward. In the living brain, however,
the elementary particles assemble into biomolecules that can store
memories for prolonged periods of time and there are biochemical cascades
that can be triggered by rewards or punishments. For example, voltage-gated
ion channels can be subject to phosphorylation or dephosphorylation
triggered by dopamine release, which modifies the sensitivity of their
voltage-sensors to the transmembrane electric field and changes the
probabilities for the ion channels to be in open or closed state (Fig.~\ref{fig:5}).
The construction of precise quantum models of neuronal function is
beyond the scope of this study and will be the subject of future research.

\section{Quantum entanglement, mind binding and free will}
\label{S7}

The composite quantum state $|\Psi\rangle$ of $k$ components resides
in a tensor product Hilbert space $\mathcal{H}=\mathcal{H}_{1}\otimes\mathcal{H}_{2}\otimes\ldots\otimes\mathcal{H}_{k}$
formed by the individual Hilbert spaces $\mathcal{H}_{1},\mathcal{H}_{2},\ldots,\mathcal{H}_{k}$
of the corresponding components. Inside the composite Hilbert space
$\mathcal{H}$, there are two kinds of states: those that are factorizable
and those that are non-factorizable.

\begin{definition}
(Non-entangled quantum states) Factorizable composite quantum states in the form
\begin{equation}
|\Psi\rangle=|\psi_{1}\rangle\otimes|\psi_{2}\rangle\otimes\ldots\otimes|\psi_{k}\rangle\label{eq:separable}
\end{equation}
are called separable or non-entangled states \cite{Gudder2020a,Gudder2020b}.
\end{definition}

\begin{definition}
(Entangled quantum states) Non-factorizable composite quantum states,
which cannot be expressed as a tensor product of component states
\begin{equation}
|\Psi\rangle\neq|\psi_{1}\rangle\otimes|\psi_{2}\rangle\otimes\ldots\otimes|\psi_{k}\rangle
\end{equation}
are called non-separable or entangled states \cite{Georgiev2022b,Georgiev2022c}.
\end{definition}

In the quantum reductive approach, minds are attributed only to non-factorizable
pure quantum states \cite{Georgiev2017}. This means that quantum
entanglement binds conscious experiences into a single unitary mind.
The components of a quantum entangled state do not have definite individual
state vectors, therefore it is not surprising that they are not endowed
with individual minds. On the other hand, separable quantum states
of the form \eqref{eq:separable} are such that both the composite
system and the component systems have definite state vectors, namely,
the composite state vector is $|\Psi\rangle$, whereas the individual
component state vectors are $|\psi_{1}\rangle,|\psi_{2}\rangle,\ldots,|\psi_{k}\rangle$.
In the latter case, the existence of a state vector such as $|\Psi\rangle$
is not sufficient to guarantee the attribution of a single mind, instead
the separability of $|\Psi\rangle$ makes it a \emph{collection of
minds}. This is more easily understandable by taking into consideration
that separability actually implies \emph{statistical independence} \cite{Georgiev2021b}.
In other words, two separable minds have no direct access to each other's conscious experiences
and possess their own individual free will. The situation can be clarified
with the following simplified toy example.

\begin{example}
(Separate minds have independent free will) Suppose that Alice and
Bob are two separate minds, each of which is modeled by a definite
two-level state
\begin{align}
|\Psi_{A}\rangle & =\alpha_{0}|\downarrow_{z}\rangle+\alpha_{1}|\uparrow_{z}\rangle\\
|\Psi_{B}\rangle & =\beta_{0}|\downarrow_{z}\rangle+\beta_{1}|\uparrow_{z}\rangle
\end{align}
The composite state is 
\begin{equation}
|\Psi_{AB}\rangle=|\Psi_{A}\rangle\otimes|\Psi_{B}\rangle
\end{equation}
which makes the resulting quantum probabilities $P=\{p_{00},p_{01},p_{10},p_{11}\}$
for each of the four possible outcomes $\{|\downarrow_{z}\downarrow_{z}\rangle,|\downarrow_{z}\uparrow_{z}\rangle,|\uparrow_{z}\downarrow_{z}\rangle,|\uparrow_{z}\uparrow_{z}\rangle\}$
to be factorizable 
\begin{equation}
p_{00}=\left|\alpha_{0}\right|^{2}\left|\beta_{0}\right|^{2},\quad p_{01}=\left|\alpha_{0}\right|^{2}\left|\beta_{1}\right|^{2},\quad p_{10}=\left|\alpha_{1}\right|^{2}\left|\beta_{0}\right|^{2},\quad p_{11}=\left|\alpha_{1}\right|^{2}\left|\beta_{1}\right|^{2}
\end{equation}
This means that the amount of free will $\mathscr{F}_{AB}$ manifested
by the composite system comprised of Alice and Bob is exactly the
sum of the amount of free will $\mathscr{F}_{A}$ manifested by Alice
and the amount of free will $\mathscr{F}_{B}$ manifested by Bob 
\begin{align}
\mathscr{F}_{AB} & =-\sum_{i=0}^{1}\sum_{j=0}^{1}\left|\alpha_{i}\right|^{2}\left|\beta_{j}\right|^{2}\log_{2}\left(\left|\alpha_{i}\right|^{2}\left|\beta_{j}\right|^{2}\right)\nonumber \\
 & =-\left(\left|\beta_{0}\right|^{2}+\left|\beta_{1}\right|^{2}\right)\sum_{i=0}^{1}\left|\alpha_{i}\right|^{2}\log_{2}\left|\alpha_{i}\right|^{2}-\left(\left|\alpha_{0}\right|^{2}+\left|\alpha_{1}\right|^{2}\right)\sum_{j=0}^{1}\left|\beta_{j}\right|^{2}\log_{2}\left|\beta_{j}\right|^{2}\nonumber \\
 & =-\sum_{i=0}^{1}\left|\alpha_{i}\right|^{2}\log_{2}\left|\alpha_{i}\right|^{2}-\sum_{j=0}^{1}\left|\beta_{j}\right|^{2}\log_{2}\left|\beta_{j}\right|^{2}=\mathscr{F}_{A}+\mathscr{F}_{B}
\end{align}
where we have used the logarithmic product property and the normalization
of the component states, namely, $\left|\alpha_{0}\right|^{2}+\left|\alpha_{1}\right|^{2}=1$
and $\left|\beta_{0}\right|^{2}+\left|\beta_{1}\right|^{2}=1$.
\end{example}

\begin{example}
(Entanglement binds components into a single mind) Consider now what
would happen if the composite state of two components $A$ and $B$
was maximally quantum entangled
\begin{equation}
|\Psi_{AB}\rangle=\frac{1}{\sqrt{2}}\left(|\downarrow_{z}\uparrow_{z}\rangle+|\uparrow_{z}\downarrow_{z}\rangle\right)
\end{equation}
In the quantum entangled state, neither component $A$ nor component $B$
can have their own state vector and the resulting quantum probabilities
$P=\{p_{00},p_{01},p_{10},p_{11}\}$ for each of the four possible
outcomes $\{|\downarrow_{z}\downarrow_{z}\rangle,|\downarrow_{z}\uparrow_{z}\rangle,|\uparrow_{z}\downarrow_{z}\rangle,|\uparrow_{z}\uparrow_{z}\rangle\}$
are not factorizable
\begin{equation}
p_{00}=0,\quad p_{01}=\frac{1}{2},\quad p_{10}=\frac{1}{2},\quad p_{11}=0
\end{equation}
The reduced density matrices for $A$ or $B$ are completely mixed
\begin{equation}
\hat{\rho}_{A}=\hat{\rho}_{B}=\left(\begin{array}{cc}
\frac{1}{2} & 0\\
0 & \frac{1}{2}
\end{array}\right)
\end{equation}
which means that when viewed locally the probability weights for both
$A$ and $B$ are $P=\{\frac{1}{2},\frac{1}{2}\}$ for the outcomes
$\{|\downarrow_{z}\rangle,|\uparrow_{z}\rangle\}$.
Without taking into consideration the quantum entanglement, from the local
statistics one may incorrectly conclude that component~$A$ manifested $\mathscr{F}_{A}=1$
bit of free will and component~$B$ also manifested $\mathscr{F}_{B}=1$
bit of free will. When the quantum entanglement is taken into account,
however, it can be correctly concluded that the composite system exhibited only
$\mathscr{F}_{AB}=1$ bit of free will because the outcomes produced
by $A$ and $B$ were perfectly anti-correlated, when $A$ chose $|\downarrow_{z}\rangle$
it was always the case that $B$ chose $|\uparrow_{z}\rangle$ and
when $A$ chose $|\uparrow_{z}\rangle$ it was always the case that
$B$ chose $|\downarrow_{z}\rangle$. This apparent \emph{subadditivity}
of the manifested amount of free will, namely, $\mathscr{F}_{AB}=1<1+1=2=\mathscr{F}_{A}+\mathscr{F}_{B}$
is due to the \emph{quantum correlations} resulting from quantum entanglement.
In other words, the component subsystems cannot manifest their own
free will independently from the rest of the composite quantum entangled
state. Instead, it is only the composite quantum entangled system
as a whole that manifests its own independent free will and imposes
quantum correlations on the components. The attribution of free will
to the composite entangled system, which has a non-factorizable state
vector and a mind, but not to the component subsystems which have
neither their own state vectors nor independent minds, provides a
one-to-one correspondence between non-factorizable pure quantum state
vectors, single minds and their independent free will \cite{Georgiev2017}. To summarize,
pure quantum entangled systems have a single mind and it is the composite
quantum entangled system as a whole that possesses the free will.
\end{example}

\section{Wave function collapse and disentanglement}
\label{S8}

The capacity of quantum systems to make genuine choices requires a
physical \emph{actualization} process that converts one of the \emph{possible}
outcomes into an \emph{actual} outcome. It is exactly the act of actualization
that makes the stochastic process~$\Gamma$ suitable for describing
the concept of choosing. Without actualizations there can be only
possibilities without anything ever happening in reality. In quantum
physics, this means that the unitary quantum dynamics prescribed by
the Schr\"{o}dinger equation can only produce possibilities, but it needs
a stochastic process governed by the Born rule in order to create
actualities. The stochastic actualization process is referred to as
\emph{wave function collapse} and is expected to occur when the composite
quantum system reaches a certain energy threshold $\mathscr{E}$.
The energy threshold $\mathscr{E}$ for wave function collapse is
a parameter to be determined experimentally \cite{Bassi2003,Vinante2020,Carlesso2022}, but in the context of
quantum reductive theories of consciousness it is expected to be way
above the energies of individual elementary particles and slightly
below the total energy consumed by the metabolically active human
brain \cite{Georgiev2017}. The existence of an energy threshold $\mathscr{E}$
for wave function collapse solves the measurement problem in quantum
mechanics and prevents the whole universe into getting entangled into
a single universal cosmic mind \cite{Georgiev2017,Georgiev2020b}.

\begin{example}
(Schr\"{o}dinger's cat) The measurement problem results from the unitary
quantum dynamics prescribed by the Schr\"{o}dinger equation
\begin{equation}
\imath\hbar\frac{d}{dt}|\Psi(t)\rangle=\hat{H}|\Psi(t)\rangle
\end{equation}
where $\hat{H}$ is the Hamiltonian and $|\Psi(t)\rangle$ is the
time-dependent quantum state vector of the system. The Schr\"{o}dinger equation is a linear
ordinary differential equation (ODE) with formal solution
\begin{equation}
|\Psi(t)\rangle=e^{-\frac{\imath}{\hbar}\hat{H}t}|\Psi(0)\rangle=\hat{U}|\Psi(0)\rangle
\end{equation}
The linearity implies that if $|\Psi_{1}\rangle$ and $|\Psi_{2}\rangle$
are two solutions of the Schr\"{o}dinger equation, then any linear combination
of those two solutions is also a valid solution
\begin{equation}
|\Psi\rangle=\alpha_{1}|\Psi_{1}\rangle+\alpha_{2}|\Psi_{2}\rangle
\end{equation}
The unitary time evolution operator $\hat{U}=e^{-\frac{\imath}{\hbar}\hat{H}t}$,
which governs the quantum dynamics, is also linear 
\begin{equation}
\hat{U}\left(\alpha_{1}|\Psi_{1}\rangle+\alpha_{2}|\Psi_{2}\rangle\right)=\alpha_{1}\hat{U}|\Psi_{1}\rangle+\alpha_{2}\hat{U}|\Psi_{2}\rangle
\end{equation}
In order to see how the unitary quantum dynamics leads to quantum
entangled superpositions of macroscopic devices, consider a single
photon $|\gamma\rangle$, which if prepared in a state with horizontal
polarization $|\gamma_{H}\rangle$ is detected by a macroscopic measuring
device initially prepared in state $|M\rangle$ whose pointer moves horizontally to final state $|M_{H}\rangle$
and if prepared in a state with vertical polarization $|\gamma_{V}\rangle$
is detected by the same macroscopic measuring device $|M\rangle$
but the pointer moves vertically to final state $|M_{V}\rangle$. For each of the
two alternative cases, the action of the unitary operator governing
the quantum dynamics of the composite system composed of the photon
and the measuring device can be written as
\begin{align}
\hat{U}|\gamma_{H}\rangle|M\rangle & =|\gamma_{H}\rangle|M_{H}\rangle\\
\hat{U}|\gamma_{V}\rangle|M\rangle & =|\gamma_{V}\rangle|M_{V}\rangle
\end{align}
Now, consider what will happen if the photon is polarized diagonally
$|\gamma_{+}\rangle=\frac{1}{\sqrt{2}}(|\gamma_{H}\rangle+|\gamma_{V}\rangle)$
before it is sent to the measuring device. The linearity of the unitary
time evolution operator creates an entangled quantum state
\begin{align}
\hat{U}|\gamma_{+}\rangle|M\rangle & =\hat{U}\,\frac{1}{\sqrt{2}}(|\gamma_{H}\rangle+|\gamma_{V}\rangle)\,|M\rangle\nonumber \\
 & =\frac{1}{\sqrt{2}}(\hat{U}|\gamma_{H}\rangle|M\rangle+\hat{U}|\gamma_{V}\rangle|M\rangle)\nonumber \\
 & =\frac{1}{\sqrt{2}}(|\gamma_{H}\rangle|M_{H}\rangle+|\gamma_{V}\rangle|M_{V}\rangle)\label{eq:cat}
\end{align}
Experimentally, we always observe only one of the two possible alternative
readings of the measurement device, but never the entangled superposition
of both. To highlight the conflict between experimental observations
and entangled macroscopic superpositions, Erwin Schr\"{o}dinger proposed
that the measuring device be replaced by a cat placed inside a box
equipped with a photosensitive mechanism that releases poisonous gas
only if the photon is detected in one of the states, e.g., $|\gamma_{H}\rangle$.
In this case, the resulting entangled macroscopic quantum superposition
will include a cat that is simultaneously dead and alive \cite{Schrodinger1935,Schrodinger1983}.

The conflict between quantum mechanics and the lack of observable
macroscopic superpositions is solved by the energy threshold $\mathscr{E}$
for wave function collapse. If the measuring device is sufficiently
large to pass the energy threshold $\mathscr{E}$, the quantum entangled
state \eqref{eq:cat} undergoes stochastic disentanglement~$\Gamma$
to one of the two separable outcomes $\{|\gamma_{H}\rangle|M_{H}\rangle,|\gamma_{V}\rangle|M_{V}\rangle\}$
with probability weights given by the Born rule $P=\{\frac{1}{2},\frac{1}{2}\}$.
Here, we recall that the concise product notation of two kets actually
implies the tensor product, namely, $|\gamma_{H}\rangle|M_{H}\rangle\equiv|\gamma_{H}\rangle\otimes|M_{H}\rangle$
and $|\gamma_{V}\rangle|M_{V}\rangle\equiv|\gamma_{V}\rangle\otimes|M_{V}\rangle$.
\end{example}

Using all of the technical concepts defined above, we are now able
to briefly outline how the quantum reductive model of consciousness
is supposed to operate.

\begin{example}
(Quantum reductive model of consciousness) The model of how our consciousness
inputs sensory information from the environment, makes choices and
outputs those choices to the environment implements a repeated cycle
consisting of several steps. 

\emph{Step 1}. The starting point of the cycle can be defined to be
a part of the anatomical brain cortex that is in a disentangled tensor
product state of component biomolecules, ions or elementary particles,
$|\Psi(t_{1})\rangle=|\psi_{1}(t_{1})\rangle\otimes|\psi_{2}(t_{1})\rangle\otimes\ldots\otimes|\psi_{k}(t_{1})\rangle$.
At this stage, the composite system $|\Psi(t_{1})\rangle$ is a \emph{collection
of elementary minds}. Among the components $\{|\psi_{1}(t_{1})\rangle,|\psi_{2}(t_{1})\rangle,\ldots,|\psi_{k}(t_{1})\rangle\}$
are included energy quanta coming from the environment that input
sensory information from the surrounding world.

\emph{Step 2}. Then, the components interact with each other and undergo
unitary quantum dynamics prescribed by the Schr\"{o}dinger equation. Nearby
components entangle with each other to form small entangled clusters,
then as time goes on these small entangled clusters entangle with
each other to form larger entangled clusters, and so on, until
a sufficiently large entangled cluster $|\Psi(t_{2})\rangle$ reaches
the energy threshold $\mathscr{E}$ for wave function collapse. At
this stage, the quantum entanglement has bound the conscious experiences
of all the component elementary minds into a single integrated conscious
experience $|\Psi(t_{2})\rangle$ that could be recognized as the
\emph{conscious self}. Furthermore, because the energy threshold $\mathscr{E}$
for wave function collapse is reached the \emph{conscious self }has
to make a choice and selects one of the available disentangled outcomes
using its free will.

\emph{Step 3}. After the choice is made, the chosen outcome by the
conscious self is a disentangled state $|\Psi(t_{3})\rangle=|\psi_{1}(t_{3})\rangle\otimes|\psi_{2}(t_{3})\rangle\otimes\ldots\otimes|\psi_{k}(t_{3})\rangle$
such that some small portion of the disentangled components leaves
the brain in order to output motor information towards the environment
and gets replaced by incoming sensory energy quanta, for example,
we can say that $|\psi_{k}(t_{3})\rangle$ was an electric signal
that left the brain and it was replaced by another sensory signal
$|\psi_{k^{\prime}}(t_{3})\rangle$.

\emph{Step 4}. The cycle can be considered complete when the growth
of entangled clusters proceeds based on unitary quantum dynamics
and another sufficiently large entangled cluster $|\Psi(t_{4})\rangle$
reaches the energy threshold $\mathscr{E}$ for wave function collapse.
This future version of the conscious self is composed by the same
component subsystems indexed from $1$ to $k-1$ together with the
new sensory component $k^{\prime}$ and lacking the motor output component
$k$ sent to the environment. Thus, the single integrated conscious
experience $|\Psi(t_{4})\rangle$ is another instance of the conscious
self that is aware of the new sensory information and bears the consequences
of its past motor choice sent to the environment. 

This quantum reductive cycle of consciousness has been constructed
in such a way, so that it addresses several philosophical problems of
consciousness at once (the mind--brain problem, the mind physical
boundary problem, the mind binding problem, the mind causal potency
problem, the free will problem, the mind inner privacy problem and
the hard problem) and explains how the conscious self changes in time
by making choices and how it can acquire knowledge based on the outcomes
of past choices (for a detailed exposition and more extensive discussion,
see Chapter~6 in Ref.~\citen{Georgiev2017}).
\end{example}

\section{Discussion}
\label{sec:9}

Our prehistoric ancestors have painted cave walls with exquisite art
images of horses, bisons, mammoths, cave bears, lions, panthers, rhinoceroses,
owls and other animals \cite{Chauvet1996,Chauvet2001,Aubert2014,Hoffmann2018,Pike2012,Herzog2010,Harris2011}. The anatomical
description of these animals, some of which are long extinct, is precisely
captured by the prehistoric artists \cite{Kurten1972,Bon2011}. Because
the primary purpose of art is to elicit conscious experiences in the
viewer, it is highly implausible to assume that the prehistoric artists
have produced those paintings without having any experiences of the
animals that were painted. Furthermore, due to the precise anatomical
correspondence between the animals that were seen and the animals that were
painted, it is fair to conclude that the conscious experiences of
the prehistoric painters have been causally potent in producing
the physical artifacts in the form of cave paintings \cite{Lewis2002,Lewis2003}.
The causal potency of consciousness in the physical world is a prerequisite
for the evolution of human consciousness through natural selection
of our animal ancestors. Yet, if one sticks to the premises of classical
functionalism, it would appear that the conscious mind is causally
impotent in the physical world and could not have evolved naturally.
Therefore, one needs to reject classical functionalism and search for better physical alternatives.

Here, we have presented a thorough analysis of the ramifications of two contrasting approaches to consciousness, namely, functionalism or reductionism, within the two main physical frameworks provided by classical physics or quantum physics.
To derive rigorous theoretical results about classical or quantum systems, we have comprehensively reviewed the mathematical theory behind ordinary differential equations (ODEs) or stochastic differential equations (SDEs) and have illustrated the characteristic properties of the resulting classical or quantum dynamics using minimal toy systems.
Further consideration of the mind--brain relationship within each of the four theoretical schemes, namely,
classical functionalism, classical reductionism, quantum functionalism
or quantum reductionism, revealed that only the latter scheme supports
both the causal potency of the conscious mind and free will.

Quantum reductionism endorses a form of panexperientialism, according to which elementary feelings are attributed to all quantum systems, including simple living organisms that do not have a nervous system. Such elementary experiences or elementary feelings would be also attributed to inanimate quantum materials with the explicit caveat that these are memoryless. In fact, our adult human consciousness is a combination of ongoing sensory experiences and a recalled memory of who we are. In medical practice, we have the striking observations of a newborn baby or an aged person with Alzheimer's disease, who definitely have conscious experiences, but either do not yet know who they are or have forgotten who they are. Without the ability to memorize past experiences, a quantum physical system will not be able to communicate to the surrounding world that it had those experiences. Therefore, the presence of transient memoryless experiences in inanimate quantum materials is not something to be bothered by, instead these are the primordial substrates for natural evolution from which the conscious experiences in living systems should have evolved, given the fact that living organisms use free energy and are capable of storing memories. During sleep, the energy consumption by the brain is rapidly suppressed, which is reflected by the fact that the conscious experiences during dreaming are often illogical and hard to recall. This further highlights the need of energy source for storing memories and recalling memories of past experiences, as opposed to merely having any form of experience.

Alternative quantum functional models of consciousness rely on the problematic idea that conscious experiences are somehow generated by an insentient brain substrate. For example, the Orch OR model proposed by Penrose and Hameroff postulates that the quantum brain dynamics is unconscious or subconscious, whereas it is the event of objective reduction that creates intermittently the ``flashes'' of conscious experiences with a frequency of~40~Hz. The problem with the latter claim is the reversed causal order, namely, if the event of objective reduction is causally potent in creating the phenomenological content of the conscious experience, then the generated conscious experience would be causally impotent to choose the outcome of the objective reduction. In contrast, the quantum reductionism advocated in this work holds that the quantum dynamics is continuously conscious and it is the ongoing consciousness that is causally potent in choosing the actual outcome of the objective reduction.

Philosophers often present together the problems of mental causation and free will \cite{Sartorio2016}.
In fact, the two problems are distinct but hierarchically organized as follows:
(1) causally impotent mind cannot cause any course of action in the physical world,
(2) causally potent mind that lacks free will can cause exactly one course of action in the physical world, and
(3) causally potent mind that possesses free will can choose between two or more possible courses of action in the physical world.
This hierarchy of the two problems implies that one could solve the problem of mental causation
without solving the free will problem, whereas the converse is impossible,
namely, solving the free will problem cannot be done without also
solving the problem of mental causation.

In ethics, the hierarchical organization of the problems of mental
causation and free will leads to different levels of blame attribution.
If your conscious mind is causally potent, then you can be
blamed for what you have done. If in addition to causal potency your
conscious mind possesses free will through which to execute choices,
then you can also be blamed for what you have not done. The logic of blame
attribution is straightforward: (1) if your conscious mind is causally
potent but lacks free will, then there is only a single course of
action that you are able to cause in the physical world. Since you
could not have done otherwise, you can only be blamed for what you
have done, but not for what you have not done. Alternatively, (2)
if your conscious mind possesses free will, then you are able to make
a genuine choice among at least two possible courses of action. This
immediately makes you morally responsible not only for what you have
chosen to do, but also for what you have chosen not to do. In other
words, the existence of free will comes with the burden of having
to contemplate the consequences of your actions so that you not only
avoid causing harm, but also avoid missing potential benefits had
you acted differently.

In evolution theory, the lack of causal potency is much more harmful
proposition compared to the lack of free will because a causally impotent
human consciousness could not have evolved through natural selection.
Consequently, the main goal of the present study was to derive rigorously
the causal potency or impotency of consciousness within the four theoretical
schemes including functionalism or reductionism in classical or quantum physics.
Noteworthy, utilizing the mathematical properties of ordinary differential equations,
we have demonstrated that classical functionalism is not a plausible
physical theory of consciousness and classical physics is a disadvantageous
framework for approaching the mind--brain problem. Then, we have established
that quantum physics is not just an exotic extravagance, but an indispensable general
theoretical framework supporting stochastic differential equations
that are ideally tailored to describe dynamic trajectories produced
by sequential choices. Further building upon the quantum information
theoretic properties of quantum states and quantum observables, we
have shown that quantum reductionism predicts unobservable conscious
mind that is causally potent in choosing the future course of action
of the observable brain. This explains how the human consciousness
could have evolved through natural selection in our animal ancestors.

\interlinepenalty=10000

\end{document}